\documentclass[11pt]{article}
                                                                                                                                                                                                                                                                                                                                                                                                        \usepackage[margin=1in]{geometry}
                                                                                                                                                                                                                                                                                                                                                                                                            \usepackage{blindtext, rotating}

                                                                                                                                                                                                                                                                                                                                                                                                            \usepackage{graphicx}
                                                                                                                                                                                                                                                                                                                                                                                                            
                                                                                                                                                                                                                                                                                                                                                                                                             \usepackage{bbm}

\usepackage[T1]{fontenc}
\usepackage[utf8]{inputenc}

\usepackage{caption}

\usepackage{enumitem}
                                                                                                                                                                                                                                                                                                                                                                                                                  \usepackage{multirow}
\usepackage[dvipsnames]{xcolor}

                                                                                                                                                                                                                                                                                                                                                                                                                 
                                                                                                                                                                                                                                                                                                                                                                                                      \usepackage{amsthm} 
                                                                                                                                                                                                                                                                                                                                                                                                     \usepackage{framed} 

                                                                                                                                                                                                                                                                                                                                                                                                                                                                                                                                                                                                                                                                                                                                                                                                                               \def \pbb {\mathbb{P}}

\def  \calR {\mathcal{R}}
\def \calS {\mathcal{S}}

\def  \calZ {\mathcal{Z}}

\usepackage{mathtools} 
                                                                                                                                                                                                                                                                                                                                                                                                                                                                                                                                                                                                                                                                                                                                                                                                           \DeclareMathOperator{\In}{In}\DeclareMathOperator{\out}{Out} \DeclareMathOperator{\rank}{rank} \DeclareMathOperator{\Span}{span}

\usepackage{amssymb,amsfonts, amsmath,amscd,dsfont,mathrsfs,pifont}
\usepackage{cite}
\usepackage{hyperref}
\usepackage{xcolor}

\usepackage{tikz}
\usepackage{colortbl}
\usetikzlibrary{tikzmark}


\newtheorem{theorem}{Theorem}
\newtheorem{corollary}{Corollary}
\newtheorem{definition}{Definition}
\newtheorem{lemma}{Lemma}

\newtheorem{remark}{Remark}
\newtheorem{example}{Example}

\begin{document}
\date{}
\title{Private Sequential  Function Computation}
\author{Behrooz~Tahmasebi and Mohammad~Ali~Maddah-Ali 
\thanks{
Behrooz Tahmasebi is with the Department of Electrical Engineering and Computer Science (EECS), Massachusetts  Institute of Technology (MIT), Cambridge, MA, USA  (email: bzt@mit.edu). Mohammad Ali Maddah-Ali is with Nokia Bell Labs, Holmdel, NJ, USA (email: mohammad.maddahali@nokia-bell-labs.com).}
\thanks{
This paper has been presented in part at IEEE ISIT 2019 \cite{private_seq}.}
}

 \renewcommand\footnotemark{}
\maketitle

\begin{abstract}

Consider a system, including a user, $N$ servers,  and $K$ basic functions which are known at all of  the servers. Using the combination of those basic functions, it is possible to construct a wide class of functions. The user wishes to compute a particular combination of the basic functions, by offloading the computation to $N$ servers, while the servers should not obtain any information about which combination of the basic functions is to be computed. The objective is to minimize the total number of queries asked by the user from the servers to achieve the desired result. 

As a first step toward this problem, in this paper, we consider the case where the user is interested in a class  of functions which are composition of the basic functions, while each basic function appears in the composition exactly once. This means that in this case, to ensure privacy, we only require to hide to the order of the basic functions in the desired composition of the user. We further assume that the basic functions are linear and  can be represented by (possibly large scale) matrices. 
We call this problem as \emph{private sequential  function computation}.  We study the capacity $C$, defined as the supremum of the number of desired computations, normalized by the number of computations done at the servers,  subject to the privacy constraint.
We prove that $(1-\frac{1}{N})/ (1-\frac{1}{\max(K,N)}) \le C \le 1$. For the achievability, we show that the user can retrieve the desired order of composition, by choosing a proper order of inquiries among different servers, while keeping the order of computations for each server fixed, irrespective of the desired order of composition. In the end, we develop an information-theoretic converse which results in an  upper bound on the capacity.

\end{abstract}

\textbf{Keywords:} 
 Private information retrieval, 
 private computation,
 private function computation.

\allowdisplaybreaks
\section{Introduction}\label{intro}

Outsourcing storage and computation to external parties are the inevitable reaction to the growing size of data and increasing load of processing. 
One of the main challenges in those arrangements is to ensure the privacy of data and algorithms, with minimum overhead in terms of computation, storage,  and communication.

\emph{Private function retrieval} (PFR)~\cite{PFR_jafar, PFR_maddah}, also known as \emph{private computation},  is one recently proposed approach to model the problem of privacy in computation.  In this problem, a  user wants to retrieve a linear combination of  a number of files,  stored  in a number of replicated non-colluding servers,  without revealing any information about the coefficients appeared in the linear combination to the servers \cite{PFR_jafar, PFR_maddah}.  In \cite{PFR_maddah, PFR_jafar}, the authors studied the information theoretic capacity of PFR, defined to be the maximum number of bits of  information about the desired linear computation that can be privately retrieved per bit of downloaded information. In~\cite{PFR_jafar}, the authors fully characterized the capacity of the problem.  This problem is also considered for the coded databases \cite{PFR_linear,PFR_general,PFR_coded}. This is also extended to private computation of arbitrary polynomials on Lagrange coded data~\cite{PFR_polynomial} (see also \cite{obead}), and private inner product retrieval \cite{mousavi}.  

PFR is an extension of the renowned  problem of \emph{private information retrieval} (PIR). In PIR, a user wishes to retrieve a specific file from a database,  duplicated across multiple non-colluding servers, while the file identity must be kept private from the servers \cite{PIR_sudan, PIR_sudan2, PIR_jafar}. In \cite{PIR_jafar}, the basic PIR problem has been investigated from an information-theoretic viewpoint  and its capacity has been characterized. Again, the capacity there is defined as  the maximum number of desired information bits per bit of  download in privacy preserving algorithms. Interestingly, the capacity of PIR and PFR are the same, meaning that there is no extra cost to retrieve  a linear combination of the files, rather than a  pure file. PIR  has been also extended to the   different scenarios, including but not limited to, multiround PIR
\cite{PIR_multiround}, PIR with colluding servers~\cite{PIR_robust}, PIR with coded storage~\cite{PIR_MDScoded,PIR_coded},  PIR with eavesdroppers \cite{PIR_eve}, PIR with adversary
\cite{PIR_adversary}, PIR from wiretap channel
\cite{PIR_wiretap}, cache-aided PIR~\cite{PIR_cach_1,PIR_cach_2,PIR_prefetch}, PIR with coded and colluding servers  \cite{PIR_fereji, PIR_robust_2, PIR_taj}, connections between PIR and distributed storage systems \cite{PIR_dist}, and many other problems 
 \cite{PIR_packet, PIR_uncoded, PIR_trafic, PIR_server, PIR_protocol, PIR_shannon, PIR_graph, PIR_search, sun_delivery, jafar_symm, pooya_multi, jafar_cross, ulukus_noisy, ulukus_mac, jafar_x, jafar_bc, ravi_up, prox, mousavi, kazemi}. 

Another  formulation for private computation is known as  secure multi-party computation (MPC) \cite{MPC_yao, MPC_shamir, MPC_bgw}. In the secure multi-party computation, a group of parties   are trying to perform a computation task on their private inputs without disclosing any information about the inputs to each other \cite{SEC-019}. In other words, the objective in MPC is to keep the inputs secure. In this context, an important question is how to perform such task with the minimum number of servers required. In MPC also, the servers may  collude, up to a given number, in order to gain information about the private inputs. A new formulation of this problem is also  recently proposed, where the communication constraint for the computation of high dimensional inputs is also considered. For more explanations, see \cite{limited_sharing, MPC_maddah, salman, aydin}. 

In this paper, we consider private computation from a different perspective. We introduce the problem of \textit{private function computation} as follows. 
Assume that there are a number of basic and public functions
$\{f_1,f_2,\ldots, f_K\}$. Using the combination of  basic functions,
we can construct a wide class of functions of interest. A user wishes to compute a particular combination 
of  the basic functions for an arbitrary input. The user wants to offload the computation
to $N$ non-colluding servers, each can compute all of the basic functions. However, the user wants to keep the desired
function private. Note that making the function private is equivalent to hide the way we combine the basic functions. 
To achieve the desired result, the user sends a sequence of inquiries to the servers in a recursive manner. The servers answer to the inquiries of the user.  In this paper, it is assumed that the servers are not synchronized. Finally, the user must be able to achieve its desired  result, while the servers should not be able to obtain any information about the desired function, or equivalently the desired combination of the basic functions,  of the user. The objective here is to minimize the number of inquiries sent by the user to the servers under the aforementioned criterion.

As a first step toward studying the problem of private function computation, in this paper, we study a basic version of it, called \textit{private sequential function computation},  in which the class of desired functions which the user is interested in contains only the functions which are composition of the basic functions\footnote{We note that there are applications where the computation of a complicated function is the main goal of the problem, and this function is identical to the composition of a number of basic functions, cf., \cite{optimum}.}. In addition, we assume that each basic function appears in the composition exactly once. Also, the functions are assumed to be linear, and  they can be represented by (possibly large scale) full rank matrices. For example, if the set of basic functions is $\{f_1,f_2,f_3\}$, the desired function by the user is one of these six options $f_1 \circ f_2 \circ f_3$, $f_1 \circ f_3 \circ f_2$, $f_2 \circ f_1 \circ f_3$, $f_2 \circ f_3 \circ f_1$, $f_3 \circ f_1 \circ f_2$ or $ f_3 \circ f_2 \circ f_1$. At the end of computation, the servers should not infer anything about the desired function of the user  among the above 6 options.  Due to this assumption, to ensure the privacy, the only information that is required to be kept private  from the servers is the order of the composition. 
 We further assume that each time the user sends an inquiry, including a vector and the index of one the basic functions, to one of the servers, and waits for that server to return the computation of the corresponding basic function for that vector as the input. For example, the user sends the query $(n, W, k)$ to server $n$, and server $n$ returns $f_k(W)$.   Then the user forms another inquiry for one of the servers, as a function of the initial input vectors, the results of the previous inquiries, and possibly some extra random vectors. In the end, the user should be able to compute the final result, and the servers should not gain any information about the order of composition. 
 
When the number of servers is more than  the number of basic functions, this problem is simple. The user asks each server to compute at most one of the functions. The order of inquires determines the order of the basic functions in the desired function, but each server has no idea about the order.   The challenges start when the number servers is less than number of the basic functions, and thus some servers receive more than one inquiries from the user. The challenges are two folds. The server may infer information about the order of composition in the desired function from
\begin{itemize}
\item the order of functions in different queries that it receives,   
\item the dependency of the inputs in the inquiries that it receives.
\end{itemize}
In this paper, we study capacity of the  private sequential function computation problem, which is defined as the supremum of the number of desired computations per query,  subject to the privacy constraint. We derive lower and upper bounds on the  capacity of the private sequential function computation, denoted by $C$,  as  $(1-\frac{1}{N})/ (1-\frac{1}{\max(K,N)}) \le C \le 1$. We assume that the user has $M$ independent inputs, for some large integer $M$, and the objective is to calculate the desired function for all of those $M$ inputs. We show that the user can compute the desired function for those inputs, by choosing a proper order of inquiries among different servers, while keeping the order of computations for each server \emph{fixed}, irrespective of the desired order of composition in the desired function. Therefore, each server observes a fixed order of computations, and thus gains no information about the desired order of  composition. This will resolve the first challenge raised above. In addition, in each query, the user may add some random vector to the input of the query, to eliminate the dependency of inputs in the queries sent to one server.  These random vectors are reused in queries to other servers in a manner that allows the user the eliminate their contributions in the final results.  This is to resolve the second challenge raised above. 

We then provide an information theoretic converse, which results to an upper bound on the capacity. We note that the inequality $C\le 1$ may seem to be trivial at a glance, but as we show in the paper, this is not the case.

The rest of this paper is organized as follows. In Section 
\ref{prob_state},
we introduce the mathematical model considered in the paper. 
Section
\ref{main_result}
includes the  main result of the paper. 
To prove the theorem, we first provide a number of illustrative examples in Section
 \ref{motivation},
 and then present the achievable scheme in Section
  \ref{achieve}.
  In Section~\ref{proof_achieve}, we prove the feasibility, correctness, and privacy of the proposed scheme. The  proof of the upper bound on the capacity is  presented in Section
   \ref{proof_converse},
and finally, Section
 \ref{conclusion}
 concludes the paper. 
 
\textit{
Notation: For any positive integers $a,b$, let $[a:b]:=\{a,a+1,\ldots, b\}$. We use $\calS_K$ to denote the set of all permutations of $[1:K]$. Using $\Sigma = (\Sigma_K\Sigma_{K-1}\ldots \Sigma_1) \in \calS_K$, we denote the permutation $k \mapsto \Sigma_k$.
Throughout the paper  $\mathbb{F}$ is an arbitrary finite filed. The summation in  $\mathbb{F}$ is denoted by $\oplus$. All the logarithms are in the base of $|\mathbb{F}|$. For two random variables $X,Y$ we write $X\sim Y$ whenever $X$ and $Y$ are identically distributed. The notation $W_{i:j}$ means $(W_i,W_{i+1},\ldots,W_j)$. For two functions $f(n),g(n)$, we write $f(n)= o(g(n))$, when $ \frac{f(n)}{g(n)} \to 0$, as $n \to \infty$. 
}

\section{Problem Statement}\label{prob_state}
Consider a system, including a user, having access to  $M \in \mathbb{N}$ input vectors $W^{(L)}_1,W^{(L)}_2,\ldots,W^{(L)}_M$, chosen independently and  uniformly at random from $\mathbb{F}^L$, for some finite field $\mathbb{F}$ and some large  integer $L$. Thus,
\begin{align}
H(W^{(L)}_1,W^{(L)}_2,\ldots,W^{(L)}_M) = \sum_{m=1}^T H(W_{m}) = ML.
\end{align}
Assume that there is a set of basic functions represented by  $K$ (possibly large scale) matrices $F^{(L)}_1,F^{(L)}_2,\ldots,F^{(L)}_K \in \mathbb{F}^{L \times L}$,  which are distributed independently and uniformly over the set of invertible matrices\footnote{
Note that the set of invertible matrices includes almost all of the matrices, specifically in high dimensions.
} in $\mathbb{F}^{L \times L}$, thus, 
\begin{align}
H(F^{(L)}_1,F^{(L)}_2,\ldots,F^{(L)}_K) = \sum_{k=1}^K H(F^{(L)}_{k}).
\end{align}
For simplicity, throughout the paper, whenever it is not confusing, we drop superscript  $(.)^{(L)}$.

The user wishes to retrieve a specific sequential composition of  the $K$ basic functions of  input vectors.   
The sequence of computations is represented by a permutation $\Sigma  = (\Sigma_K \Sigma_{K-1} \cdots \Sigma_1)$ of $[1:K]$, selected by the user  uniformly at random from the set of all possible permutations. Thus, the user wants the result of  
$F_{\Sigma_K}F_{\Sigma_{K-1}}\cdots F_{\Sigma_2} F_{\Sigma_1} W_{m}$ for all $m \in [1:M]$. This means that the order of composition is fixed for all $M$ input vectors.
The user does not have any information about the matrices\footnote{We make this assumption to prevent the possibility of local computation at the user side. 
 }    $F_k$, $k \in [1:K]$. To obtain the desired result, it relies on $N \in \mathbb{N}$ non-colluding servers, each has access to the full knowledge of $F_k$, $k \in [1:K]$. We also refer to the parameter  $M$ as the number of requests in the paper. 

Each time that a server is called by the user, it receives some  $W \in \mathbb{F}^L$ along with an index $k \in [1:K]$, and returns $F_kW$ to the user. Assume that the servers are not synchronized.
Also, assume that the user utilizes the servers  for $D$ times in total,  in order to achieve its desired result. This means that the user generates  a sequence of queries $Q_1,Q_2,\ldots,Q_D$, where query $Q_d$, $d \in [1:D]$,  is an ordered triple $Q_d = (Q_d^{(\text{server})},  Q_d^{(\text{input})} ,   Q_d^{(\text{function})} )$,  meaning that the user at the $d^{th}$ step, sends $Q_d^{(\text{input})} \in \mathbb{F}^L$ to the server $Q_d^{(\text{server})}\in [1:N]$, and asks it to run the function $Q_d^{(\text{function})} \in [1:K] $. The server then computes the desired result, denoted by $A_d$, $A_d = F_{Q_d^{(\text{function})}}{Q_d^{(\text{input})}}\in \mathbb{F}^L$,  and sends it to the user. 
We refer to $D$ as the number of queries in the paper. 

A $(K,N,M,D,L)$  scheme of private sequential function computation comprises of a sequence of (possibly randomized) encoders $\Phi_d$, $d \in [1:D]$, such that 
$Q_d = \Phi_d(W_{1:M}, A_{1:d-1},\Sigma)$. 
In addition, it includes a sequence of functions (decoders) $\Psi_m$, $m \in [1:M]$,  such that
 $\Psi_m(W_{1:M},A_{1:D},\Sigma)$ is an estimation of $F_{\Sigma_K}F_{\Sigma_{K-1}}\ldots F_{\Sigma_{1}} W_{m}$.

We need to specify a notation for the sequence of queries received by each server. 

\begin{definition}
For any $n\in [1:N]$, the list of queries received by server $n$ is denoted by
\begin{align}
\widetilde{Q}_n := \big (Q_{d} :  d\in [1:D]\  \textrm{and}\  Q^{(\text{server})}_d=n\big ).
\end{align}
We call $\widetilde{Q}_n$ as the marginal list of queries at server $n$. 
\end{definition}

\begin{remark}\normalfont
Note that since servers are not synchronized,  the server $n$ does not have any information about the 
 locations of its received queries in the query list of the user, i.e., the exact values of index $d$ for the  queries are unknown at the server side, although their relative orders are known. 
\end{remark}

We now formally define an achievable scheme. 
\begin{definition}\label{def3}
 For any positive integers $K,N,M,D \in \mathbb{N}$,
 a sequence of $(K,N,M,D,L)$ computation schemes $\{\Phi^{(L)}_{1:D},\Psi^{(L)}_{1:M}\}$, $L=1,2,\ldots$, 
 is said to be $M-$achievable, iff  
 \begin{itemize}
\item $[$Correctness$]$ for any $m \in [1:M]$,
\begin{align}
\mathbb{P} \Big (\Psi^{(L)}_{m}(W^{(L)}_{1:M},A^{(L)}_{1:D},\Sigma) \neq F_{\Sigma_K}^{(L)}F_{\Sigma_{K-1}}^{(L)} \cdots F_{\Sigma_{1}}^{(L)} W_{m}^{(L)}\Big )\to 0,
\end{align}
as $L \to \infty$. 

\item $[$Privacy$]$ for any $n \in [1:N],L\in \mathbb{N}$, the marginal list of the  $n^{th}$ server $\widetilde{Q}^{(L)}_n$   must be independent of the order of composition, i.e., $I(\widetilde{Q}^{(L)}_n,F^{(L)}_{1:K};\Sigma) =0$.
\end{itemize}
 \end{definition}

Now we present the definitions of an achievable rate and the capacity of the private sequential function  computation problem. 
 
 \begin{definition}
The rate of a  $(K,N,M,D,L)$ computation scheme is defined as $R=\frac{KM}{D}$. A rate $R$ is said to be $M-$achievable, iff there exists a sequence of $M-$achievable  computation schemes  $(K,N,M,D,L)$, $L=1,2,\ldots$, such that their rate is least $R$. 
 \end{definition}

\begin{remark}\normalfont
The motivation of this definition for the rate is that in  a  $(K,N,M,D,L)$ computation scheme,  the user wants to compute a composition of $K$ functions on $M$ input vectors (total of $KM$ computations)  and it utilizes the servers for $D$ times.
\end{remark}

\begin{definition}
A  positive rate $R$ is said to be achievable, if there is a sequence of $M-$achievable rates $\{R_M\}_{M\in \mathbb{N}}$, such that $\lim\limits_{M  \to \infty} R_M=R$.
\end{definition}

\begin{definition}
The capacity of the private sequential  function computation, denoted by $C$, is defined as  
the supremum of all achievable rates.
\end{definition}

\section{Main Results}\label{main_result}

The main result of this paper is summarized in the following theorem.

\begin{theorem}\label{thrm1}
The capacity of the private sequential function computation problem satisfies the following inequality:
\begin{align}
({1-\frac{1}{N}})/({1-\frac{1}{\max(K,N)}}) \le C \le 1,
\end{align}
where $K$ is the number of basic functions to be computed in a sequential and private manner, and $N$ is the number of non-colluding servers. 
\end{theorem}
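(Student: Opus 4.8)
Write $k_d:=Q_d^{(\text{function})}$ and $q_d:=Q_d^{(\text{input})}$, so $A_d=F_{k_d}q_d$, and for each $k\in[1:K]$ let $\mathcal{D}_k:=\{d\in[1:D]:k_d=k\}$ and $d_k:=|\mathcal{D}_k|$; the $\mathcal{D}_k$ partition $[1:D]$, so $\sum_k d_k=D$. My plan is to show $D\ge KM$ up to vanishing terms, which immediately gives $R=KM/D\le 1$. Define the intermediate vectors $U_m^{(0)}:=W_m$ and $U_m^{(i)}:=F_{\Sigma_i}U_m^{(i-1)}$, so the desired output for request $m$ is $U_m^{(K)}$, and set $S_k:=\Span\{q_d:d\in\mathcal{D}_k\}$. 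The crux is the linear-algebraic lemma: since the $F_k$ are independent uniform invertible matrices unknown to the user, the only vectors of the form $F_{\Sigma_i}v$ recoverable from the transcript are the $\mathbb{F}$-linear combinations of the answers $\{F_{\Sigma_i}q_d:d\in\mathcal{D}_{\Sigma_i}\}$; any $F_{\Sigma_i}v$ with $v\notin S_{\Sigma_i}$ stays statistically independent of everything the user sees and cannot be produced except with vanishing probability as $L\to\infty$. Applying this at stage $K$, correctness demands the user output $U_m^{(K)}=F_{\Sigma_K}U_m^{(K-1)}$, forcing $U_m^{(K-1)}\in S_{\Sigma_K}$ for all $m$; as the $W_m$ are i.i.d.\ uniform and the $F$'s invertible, $\{U_m^{(K-1)}\}_{m}$ is $M$-dimensional with high probability, whence $\dim S_{\Sigma_K}\ge M$ and $d_{\Sigma_K}\ge M$. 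To assemble inputs whose span reaches $\{U_m^{(K-1)}\}$ the user must itself have reconstructed $U_m^{(K-1)}=F_{\Sigma_{K-1}}U_m^{(K-2)}$, which by the same lemma forces $U_m^{(K-2)}\in S_{\Sigma_{K-1}}$, and so on down to $U_m^{(0)}=W_m\in S_{\Sigma_1}$. This recursion gives $d_{\Sigma_i}\ge M$ for every $i$, so $D=\sum_i d_{\Sigma_i}\ge KM$, i.e.\ $R\le 1$; note that privacy is not even needed for this bound.

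\textbf{Main obstacle in the converse.} The difficulty is making the recursion rigorous under \emph{adaptivity}: each $q_d$ may depend on earlier answers and hence on the matrices, so the span statements must be stated as almost-sure facts conditioned on the realized transcript, and the induction must be ordered so that the randomness of $F_{\Sigma_i}$ is still ``fresh'' relative to the vectors being tested for membership in $S_{\Sigma_i}$. I would formalize the lemma through a conditional-entropy estimate showing that $H\!\left(F_{\Sigma_i}v\mid \text{transcript}\right)$ does not vanish whenever $v\notin S_{\Sigma_i}$, and then combine it with the $L\to\infty$ correctness criterion; this is where the bulk of the work lies, and it is exactly what makes $C\le 1$ nontrivial despite its innocent appearance.

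\textbf{Achievability, device one.} For $N\ge K$ the target rate is $1$ and for $N<K$ it is $(1-\tfrac1N)/(1-\tfrac1K)$; I would build a single scheme and compute its rate. To defeat the order leak I fix, once and for all and independently of $\Sigma$, a schedule assigning to each server a predetermined cyclic sequence of function indices, so that the function-index component of each marginal list $\widetilde{Q}_n$ carries no information about $\Sigma$. The permutation is realized purely by the temporal \emph{routing} of intermediate vectors: the user feeds the output of the query computing $F_{\Sigma_i}$ into the query computing $F_{\Sigma_{i+1}}$, deciding which slot of which server plays each role, while exploiting that unsynchronized servers see only the relative order of their own queries. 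The central combinatorial claim I must establish is that one fixed schedule, in which each server performs $\tfrac{K-1}{N-1}M$ computations, can route \emph{every} order $\Sigma$ --- a space-time rearrangeability statement --- and I expect this to be the hardest part of the achievability.

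\textbf{Achievability, device two, and the rate.} To defeat the input-dependency leak I add a fresh uniform mask $Z\in\mathbb{F}^L$ to the input of each query, making every $q_d$ marginally uniform and independent of $(\Sigma,F_{1:K})$, so that the whole of $\widetilde{Q}_n$ becomes independent of $\Sigma$ and the privacy requirement $I(\widetilde{Q}_n,F_{1:K};\Sigma)=0$ holds. The server then returns $F_{k_d}(\text{data}\oplus Z)$ and the user must cancel $F_{k_d}Z$; rather than paying one extra query per mask, I would reuse each mask along the routing so its contributions telescope across the pipeline and are removed in the final decoder, incurring only $\tfrac{K-N}{N-1}M$ additional queries. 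Together with the $KM$ data computations this yields $D=\tfrac{N(K-1)}{N-1}M$ and hence $R=KM/D=(1-\tfrac1N)/(1-\tfrac1K)$, while for $N\ge K$ the masking overhead is $o(KM)$ and $R\to 1$. I would finish by checking correctness (all masks cancel), privacy (each $\widetilde{Q}_n$ is a $\Sigma$-independent index sequence carrying i.i.d.\ uniform inputs), and then sending $M\to\infty$ to obtain the claimed achievable rate. The key remaining risk is ensuring the mask-sharing pattern simultaneously decorrelates every server's inputs and cancels cleanly at the decoder while staying within the stated overhead.
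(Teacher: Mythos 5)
Your outline tracks the paper's strategy on both sides (a fixed per-server function schedule realized purely by query routing, reused masks with one cancellation query serving $N-1$ data queries, and the matching rate arithmetic $D\approx\frac{N(K-1)}{N-1}M$; a ``each function index must be queried $M$ times'' count for the converse), but at exactly the places you flag as the hard parts the argument is missing rather than merely compressed, and in two of them your proposed route would not go through as stated. On the converse: your span lemma and the stage-by-stage recursion $U_m^{(K-1)}\in S_{\Sigma_K}\Rightarrow\cdots\Rightarrow W_m\in S_{\Sigma_1}$ are never established, and under adaptivity the recursion has the wrong shape, because the vectors you test for membership in $S_{\Sigma_i}$ are themselves built from answers that already involve $F_{\Sigma_i}$, so the ``freshness'' of $F_{\Sigma_i}$ cannot be ordered along the pipeline. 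The paper avoids the recursion entirely: it fixes a single index $k$, gives the user all of $F_{\sim k}$ for free, conditions on the realized function sequence $Q^{(\text{function})}_{1:D}$ at a cost of only $D\log K=o(L)$ (Lemma \ref{lemma4}), and then peels off every answer with $\delta_d\neq k$ via the Markov chain of Lemma \ref{lemma5}, so that the $M(L-o(L))$ bits of $H(F_\Sigma W_{1:M}\mid W_{1:M},F_{\sim k},\Sigma)$ (the rank computation of Lemma \ref{lemma3}) must be carried by the $d_k$ answers of at most $L$ bits each, yielding $\mathbb{E}\bigl[D_k\bigr]\ge M-o(1)$ for each $k$ separately, with no induction over stages. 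Your conditional-entropy formalization plan would essentially have to reinvent this, and it is the content of the bound, not a technicality; your remark that privacy is not needed for $C\le 1$ is correct and consistent with the paper.

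On achievability, the central ``space-time rearrangeability'' claim is asserted, not constructed, whereas the paper gives the explicit universal schedule: blocks of $K-1$ computations per server in which, at block $m$, server $n\le N-1$ computes $F_n$ on $\In_{1:N-1}(\calR^{\pi_n}_{m-\pi_n+1})$ with $\pi=\sigma^{-1}$ and then $F_{N+1},\dots,F_K$ on masked inputs, server $N$ supplying the cancellation queries $F_{N+i}Z_{m,i}$; feasibility is proved by induction on blocks (Lemma \ref{lmma1}). More importantly, your privacy argument has a genuine gap: making each $q_d$ \emph{marginally} uniform does not make the marginal list $\widetilde{Q}_n$ \emph{jointly} independent of $\Sigma$ given $F_{1:K}$, which is what Definition \ref{def3} demands. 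The paper proves the joint factorization explicitly: independence across blocks via the Markov chain through the input batches $W_{m-\pi_n+1,1:N-1}$, and independence within a block via an entropy computation using the fresh masks $Z_{m,i}$. Relatedly, the paper does \emph{not} mask the phase-one inputs at all — they are already uniform and block-wise independent as invertible images of fresh inputs — and if you literally mask every query, each mask needs its own cancellation query, so your overhead count $\frac{K-N}{N-1}M$ no longer closes; your accounting implicitly assumes the paper's selective masking. To repair your write-up you would need to (i) prove the converse lemma, most cleanly by the $F_{\sim k}$-conditioning route, (ii) exhibit the fixed schedule and its feasibility induction, and (iii) replace the marginal-uniformity claim by a proof that the joint law of each server's input list is $\Sigma$-invariant.
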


The general achievable scheme for Theorem \ref{thrm1} can be found in Section \ref{achieve}, the formal proofs of correctness and privacy of the proposed algorithm are presented  in Section  \ref{proof_achieve}, and the converse proof of Theorem \ref{thrm1} is detailed in Section \ref{proof_converse}. 

\begin{remark}\normalfont
When $K \leq N$, the capacity  is equal to one. This means that if the number of functions does not exceed the number of servers, then one can achieve the private computation without  any extra cost.  This should not be surprising, because a simple 
achievable scheme where the computation of each basic function is assigned to one specific server
 ensures privacy (see Example \ref{exmpl_1}). 
\end{remark}

\begin{remark}\normalfont
 When $K>N$, the user wishes to compute a composition of a  number of basic functions which is greater than the number of available servers. Intuitively speaking, in this case there is at least one server that must compute at least two basic functions. This means that to achieve privacy, the order of computations at that  server  should not leak information about the desired order of composition. To ensure privacy in this case, we propose a scheme which has a surprising feature:  the order of computations at each server is fixed, irrespective of the desired order of computations.  The user can retrieve any order of composition, by selecting an appropriate order for queries $Q_1,Q_2,\ldots,Q_D$, such that the order of computations  at each server remains fixed. In addition, some randomness is added to the inputs, such that the sever cannot infer any order from dependencies between inputs of the different queries.  The proposed scheme satisfies both privacy and correctness with zero probability of error, and the achievable  rate of $({1-\frac{1}{N}})/({1-\frac{1}{K}})$  as $M\to \infty$. 
\end{remark}

\begin{remark}\normalfont 
Let us explain the differences between the problem of private sequential function computation and the other formulations for the problem of private computation. 
\begin{itemize}
\item In the PFR problem~\cite{PFR_jafar, PFR_maddah}, there are $K$ vectors $W_k$, $k \in [1:K]$, available at $N$ non-colluding replicated servers,  and a user wishes to retrieve the result of $\sum_{k=1}^K\alpha_k W_k$, for some scalars $\alpha_k$, $k \in [1:K]$. The goal there is to keep the coefficient $\alpha_k$, $k \in [1:K]$ in the linear combination private. Let $F:= [W1,W_2,\cdots,W_K]\in \mathbb{F}^{L\times K}$ and $X = [\alpha_1,\alpha_2,\cdots,\alpha_K]^T$. Therefore, by the new notations, in the PFR problem, the user wishes to retrieve $FX$, while it is required to keep the vector $X$ private from the servers. However, in the problem of private sequential function computation, the user want to retrieve, e.g., $F_{\Sigma_K}F_{\Sigma_{K-1}}\cdots F_{\Sigma_2} F_{\Sigma_1} X$, for some permutation $ (\Sigma_K \Sigma_{K-1} \cdots \Sigma_1)$ of $[1:K]$  while  the privacy means the the servers should not obtain any information about the permutation $ (\Sigma_K \Sigma_{K-1} \cdots \Sigma_1)$.

\item In the secure MPC model, each one of $K$ parties has a private input $F_k$, $k\in [1:K]$, and the final goal is to compute $p(F_1,F_2,\cdots,F_K)\times X$, where $p(.)$ is a polynomial and  $X$ is a vector, using $N$ servers. In MPC,  the privacy condition requires that in the procedure of computation, the servers and other parties learn nothing about the private inputs $F_k$, $k\in [1:K]$. However, in the problem of  private sequential function computation, all servers are fully aware of $F_1,F_2,\cdots,F_K$, and the objective is to keep the order of composition $ (\Sigma_K \Sigma_{K-1} \cdots \Sigma_1)$ private. 
\end{itemize}
\end{remark}

 \section{Motivation and Examples}
\label{motivation}

To motivate the achievable scheme, in this section, we explain the proposed scheme through some examples.

\begin{example}[$K=2$ and  $N=2$]\label{exmpl_1} \normalfont
This examples explain why for $K\le N$, Theorem \ref{thrm1} states that the capacity of the private sequential function computation is equal to one.
In this example, we have $N=2$ servers and $K=2$ basic functions. Also, assume that the user has only $M=1$ input vector $W_1$. Note that in this case we have only two distinct orders of compositions; $\sigma = (1~2)$ and $\sigma=(2~1)$. One simple achievable scheme is as follows:
\begin{itemize}
\item $F_2(F_1W_1).$  In this case,  the user asks  the first server to compute the function $F_1$ on  $W_1$, and then receives $F_1W_1$. Then, the user asks the second server to compute the function $F_2$ on $F_1W_1$. We use the following schematic to demonstrate this   scheme:
\begin{eqnarray*}
\begin{array}{|cc|}
\multicolumn{2}{c}{\sigma = (2~1)}\\ \hline
\mbox{\tiny SERVER $1$}&\mbox{\tiny SERVER $2$}\\\hline
F_1W_1~\tikzmark{1x1}&\tikzmark{1y1}~F_2(F_1W_1)\\\hline
\end{array}
\end{eqnarray*}

\item $F_1(F_2W_1).$ In this case, first the user asks the second server to  compute $F_2$ on $W_1$, and  it  receives $F_2W_1$. Then, the user asks  the first server to compute $F_1$ on $F_2W_1$, see the following:
\begin{eqnarray*}
\begin{array}{|cc|}
\multicolumn{2}{c}{\sigma = (1~2)}\\ \hline
\mbox{\tiny SERVER $1$}&\mbox{\tiny SERVER $2$}\\\hline
F_1(F_2W_1) ~\tikzmark{2y1}&\tikzmark{2x1}~F_2W_1 \\\hline
\end{array}
\end{eqnarray*}

\end{itemize}

Trivially, the proposed scheme satisfies the correctness property.  It is also private. Note that in both cases the first server receives a vector and computes $F_1$, and the second server receives a vector, and computes $F_2$. There is no way for each server to distinguish the two cases. Recall that $F_1,F_2$ are full rank matrices, and since $W_1$ is distributed uniformly over $\mathbb{F}^L$, $F_1W_1$  and $F_2W_1$ are also  distributed uniformly over $\mathbb{F}^L$.

\end{example}

Generally, when $K \le N$, one can employ a similar approach to 
achieve the capacity in a one-shot zero-error setting as follows:
\begin{eqnarray*}
\begin{array}{|cccc|}
\multicolumn{4}{c}{\sigma = (\sigma_K~ \sigma_{K-1} \cdots ~ \sigma_1)}\\ \hline
\mbox{\tiny SERVER $\sigma_1$}&\mbox{\tiny SERVER $\sigma_2$}&\mbox{$\cdots$}&\mbox{\tiny SERVER $\sigma_K$}\\\hline
F_{\sigma_1}W_1~\tikzmark{a11}&\tikzmark{b11}~F_{\sigma_2}(F_{\sigma_1}W_1)~\tikzmark{c11}&\tikzmark{d11}~\ldots~\tikzmark{e11}&\tikzmark{f11}~F_{\sigma_K}(\cdots (F_{\sigma_1}W_1)\cdots)\\\hline
\end{array}
\end{eqnarray*}
\begin{tikzpicture}[overlay, remember picture]
    \draw [black, thick, ->, rounded corners=8pt] ({pic cs:a11})  ->  ({pic cs:b11});
      \draw [black, thick, ->, rounded corners=8pt] ({pic cs:c11})  ->  ({pic cs:d11});
        \draw [black, thick, ->, rounded corners=8pt] ({pic cs:e11})  ->  ({pic cs:f11});
     \end{tikzpicture}

In the above scheme, the arrows demonstrate the order of asking the queries by the user.  Accordingly, to compute $F_{\sigma_K}F_{\sigma_{K-1}}\cdots F_{\sigma_2} F_{\sigma_1} W_{1}$, the user first asks server ${\sigma_1}$ to compute $F_{\sigma_1} W_{1}$ with  $W_{1}$ as the input, then it asks server 
 ${\sigma_2}$ to compute $F_{\sigma_2}F_{\sigma_1} W_{1}$ with  $F_{\sigma_1} W_{1}$, and so on. 
 
Thus, irrespective of the order $ (\sigma_K \sigma_{K-1} \cdots \sigma_1)$,  the user always asks server $n$, $n \in [1:K]$, to compute function $F_{n}$.  Servers $K+1$ to $N$ remain idle. Thus privacy is preserved.

\begin{example}[$K=3$ and  $N=2$] \label{exmpl_2} \normalfont
This example demonstrates how to achieve the lower bound on the capacity for cases $K > N$.  Consider the problem of private sequential  function  computation with $N=2$ servers and $K=3$ functions. Assume that the user has $M$ input vectors $W_{1:M}$,  for some integer $M$,  and wants to derive $F_{\sigma_3}(F_{\sigma_2}(F_{\sigma_1}W_{1:M}))$, i.e., the desired order is $\sigma=(\sigma_3~\sigma_2~ \sigma_1)$. To explain the challenges of designing an achievable scheme for this case, let us try some solutions. 

\begin{itemize}
\item ($1^{st}$ try)  Consider the following computation scheme, which inspired from what we proposed in the previous example:
\begin{eqnarray*}
\begin{array}{|cc|}
\multicolumn{2}{c}{\sigma = (3~2~1)}\\ \hline
\mbox{\tiny SERVER $1$}&\mbox{\tiny SERVER $2$}\\\hline
F_1W_1&F_2(F_1W_1)\\\hline
F_3(F_2(F_1W_1))& -\\\hline
\end{array}
\end{eqnarray*}
From the above table, one can easy infer the order of queries are as follows: $(\textrm{Server} \ 1 ,  W_1, F_1)$,  $(\textrm{Server}\ 2, F_1W_1,  F_2)$, $(\textrm{Server}\ 1, F_2(F_1W_1), F_3)$, where the input of each query is the output of the previous query.   In other words,  first server one computes $F_1W_1$ with input $W_1$, then server 2 computes $F_2(F_1W_1)$ with input $F_1W_1$, then server one computes $F_3(F_2(F_1W_1))$ with input $F_2(F_1W_1)$. 
This computation scheme breaches information about  $\sigma$. The reason is that server one in the first query that it receives,  computes produces $F_1W_1$, and in the second query, it receives $F_2(F_1W_1))$ as input. With a little effort, it realizes that the second input is indeed equal to $F_2$ times the first output. Thus server one realizes that the desired order of composition is indeed $F_1$, then $F_2$, and finally $F_3$.

\item  ($2^{nd}$ try)  In the above scheme, one server can apply reverse computation and gain information about the order of computations. One possible solution  is to add randomness to the queries  to ensure the privacy; see the following:
\begin{eqnarray*}
\begin{array}{|cc|}
\multicolumn{2}{c}{\sigma = (3~2~1)}\\ \hline
\mbox{\tiny SERVER $1$}&\mbox{\tiny SERVER $2$}\\\hline
F_1W_1&F_2(F_1W_1)\\\hline
F_3(F_2(F_1W_1)\oplus {\color{red} Z})& F_3{\color{red} Z}\\\hline
\end{array}
\end{eqnarray*}
In the above scheme, the user chooses $Z$ uniformly at random from $\mathbb{F}^L$. From the above table, one can easy infer the order of queries as $(\textrm{Server} \ 1,  W_1, F_1)$, $(\textrm{Server} \ 2,  F_1W_1, F_2)$, $(\textrm{Server}\ 1, F_2(F_1W_1)\oplus { Z}, F_3)$, and $(\textrm{Server}\ 2,  {Z}, F_2)$. Note that those quires can be formed by the user from the result of  previous queries, and without any knowledge of $F_k$, $k=1,2,3$.  In addition, it is easy to verify that the user can calculate $F_3(F_2(F_1W_1))$ by subtracting  the result of the last query $F_3\oplus{ Z}$ from the result of the third query $F_3(F_2(F_1W_1)\oplus {Z})$. 

Is the above scheme private? The positive aspect of the above scheme is that with reverse computation, no information can be inferred by the servers. More precisely, the inputs of the two queries of server one, i.e., $W_1$ and $F_2(F_1W_1)\oplus {Z}$,  are independent, given the knowledge of $F_k$, $k=1,2,3$. Similarly the inputs of the two queries of server 2,   i.e.,  $F_1W_1$ and  ${ Z}$,   are also independent, given the knowledge of $F_k$, $k=1,2,3$. However, it is not enough to guarantee privacy. The reason is that, in the above scheme, for the other orders of composition $\sigma$, it is required to ask different orders of inquiries from each servers. For example,  if the desired order is $\sigma = (2~1~3)$, then the order of computation at server one is $F_3 \to F_2$, i.e.,  server 1 first computes  $F_3$ and then $F_2$, and the order of computation at server two is $F_1 \to F_2$, i.e.,  server 2 first computes  $F_1$ and then $F_2$. In other words, the order of computation at each server depends on $\sigma $, which is the violation of privacy constraint.  

\item (Complete solution) From the first two tries, we observe that there are two challenges in designing  the achievable scheme, (1)  dependency of the inputs of the different queries to each server, which allows  information leakage by reverse computation, and (2) dependency of the order of computation at each server to the desired order of composition   $\sigma$. To solve these issues, we introduce the following computation scheme. The proposed scheme eliminates the possibility of reverse computation by adding randomness.  
It also uses fixed order of computation at each server irrespective to $\sigma$. The proposed solutions for all six possible choices of $\sigma$ are illustrated in the following tables. Note that the arrows show the order of asking the queries and 
 the random vectors $Z_{1:M+2}$ are distributed independently and uniformly over $\mathbb{F}^L$. 
\end{itemize}
\newpage

\begin{eqnarray*}
\begin{array}{|c|ll|} 
\multicolumn{3}{c}{\sigma=(3~2~1)} \\ \hline
\mbox{\tiny{BLOCK}}&\mbox{\tiny ~~~~~~SERVER $1$}&\mbox{\tiny ~~~~~~~~SERVER $2$}\\\hline
1&
\begin{array}{l}
 F_1W_1~\tikzmark{x21}\\
 F_3{\color{red} Z_1}~\tikzmark{z21}
\end{array}
&
\begin{array}{l}
\tikzmark{y21}~F_2(F_1W_1) \\
\tikzmark{r21}~ F_3(F_2(F_1W_1) \oplus {\color{red} Z_1})
\end{array}\\ \hline
2&
\begin{array}{l}
 F_1W_2~\tikzmark{x22}\\
 F_3{\color{red} Z_2}~\tikzmark{z22}
\end{array}
&
\begin{array}{l}
\tikzmark{y22}~F_2(F_1W_2) \\
\tikzmark{r22}~ F_3(F_2(F_1W_2) \oplus {\color{red} Z_2})
\end{array}\\ \hline
$\vdots$&
$~~~~~\vdots$
&
$~~~~~\vdots$\\ \hline
M&
\begin{array}{l}
 F_1W_M~\tikzmark{x2T}\\
 F_3{\color{red} Z_M}~~~~\tikzmark{z2T}
\end{array}
&
\begin{array}{l}
\tikzmark{y2T}~F_2(F_1W_M) \\
\tikzmark{r2T}~ F_3(F_2(F_1W_M) \oplus {\color{red} Z_T})
\end{array}\\ \hline
M+1&
\begin{array}{l}
 F_1{\color{red} Z_{M+1}}~\tikzmark{x2T1}
\end{array}
&
\begin{array}{l}
\tikzmark{y2T1}~F_2{\color{red} Z_{M+2}}
\end{array}\\ \hline
\end{array}
~
\begin{array}{|c|ll|} 
\multicolumn{3}{c}{\sigma=(3~1~2)} \\ \hline
\mbox{\tiny{BLOCK}}&\mbox{\tiny ~~~~SERVER $1$}&\mbox{\tiny SERVER $2$}\\\hline
1&
\begin{array}{l}
 F_1(F_2W_1)~~~~~~~~~~~~\tikzmark{x212}\\
F_3(F_1(F_2W_1) \oplus {\color{red} Z_1}) ~\tikzmark{z212}
\end{array}
&
\begin{array}{l}
\tikzmark{y212}~F_2W_1 \\
\tikzmark{r212}~F_3{\color{red} Z_1}
\end{array}\\ \hline
2&
\begin{array}{l}
 F_1(F_2W_2)~~~~~~~~~~~~~\tikzmark{x222}\\
F_3(F_1(F_2W_2)\oplus {\color{red} Z_2}) ~\tikzmark{z222}
\end{array}
&
\begin{array}{l}
\tikzmark{y222}~F_2W_2 \\
\tikzmark{r222}~ F_3{\color{red} Z_2}
\end{array}\\ \hline
$\vdots$&

$~~~~~\vdots$
&
$~~~~~\vdots$\\ \hline
M&
\begin{array}{l}
 F_1(F_2W_M)~~~~~~~~~~~~~\tikzmark{x2T2}\\
F_3(F_1(F_2W_M) \oplus {\color{red} Z_T})~\tikzmark{z2T2}
\end{array}
&
\begin{array}{l}
\tikzmark{y2T2}~F_2W_M \\
\tikzmark{r2T2}~  F_3{\color{red} Z_M}
\end{array}\\ \hline
M+1&
\begin{array}{l}
 F_1{\color{red} Z_{M+2}}~\tikzmark{x2T11}
 \end{array}
&
\begin{array}{l}
\tikzmark{y2T12}~F_2{\color{red} Z_{M+1}}
\end{array}\\ \hline
\end{array}
\end{eqnarray*}
\begin{tikzpicture}[overlay, remember picture]
    \draw [black, thick, rounded corners=8pt] ({pic cs:x21})  --  ({pic cs:y21})  -- ({pic cs:r21})  -- ({pic cs:z21}) --
   ({pic cs:x22})  --  ({pic cs:y22})  -- ({pic cs:r22}) -- ({pic cs:z22})   ;
     \draw [black, thick, ->, rounded corners=8pt] ({pic cs:x2T})  ->  ({pic cs:y2T})  ->    ({pic cs:r2T})  -> ({pic cs:z2T})  ->
      ({pic cs:x2T1})  ->  ({pic cs:y2T1}) ;
      \draw [black, thick,rounded corners=8pt]   ({pic cs:y212})  -- ({pic cs:x212})  -- ({pic cs:z212})  -- ({pic cs:r212})  --
   ({pic cs:y222}) -- ({pic cs:x222})   -- ({pic cs:z222}) -- ({pic cs:r222}) ;
     \draw [black, thick, ->, rounded corners=8pt]  ({pic cs:y2T2})  ->  ({pic cs:x2T2})   ->  ({pic cs:z2T2}) -> ({pic cs:r2T2}) ->
      ({pic cs:y2T12}) -> ({pic cs:x2T11})  ;
     \end{tikzpicture}

\begin{eqnarray*}
\begin{array}{|c|ll|} 
\multicolumn{3}{c}{\sigma=(2~3~1)} \\ \hline
\mbox{\tiny{BLOCK}}&\mbox{\tiny ~~~~~~SERVER $1$}&\mbox{\tiny ~~~~~~SERVER $2$}\\\hline
1&
\begin{array}{l}
 F_1W_1~\tikzmark{a21}\\
F_3{\color{red} Z_2}~\tikzmark{c21}
\end{array}
&
\begin{array}{l}
 \tikzmark{b21}~F_2{\color{red} Z_1} \\
\tikzmark{d21}~  F_3(F_1W_1\oplus {\color{red} Z_2})
\end{array}\\ \hline
2&
\begin{array}{l}
 F_1W_2~\tikzmark{a22}\\
F_3{\color{red} Z_3}~\tikzmark{c22}
\end{array}
&
\begin{array}{l}
\tikzmark{b22}~F_2(F_3(F_1W_1)) \\
\tikzmark{d22}~  F_3(F_1W_2 \oplus {\color{red} Z_3})
\end{array}\\ \hline
3&
\begin{array}{l}
\cdots \\
\cdots
\end{array}
&
\begin{array}{l}
~\tikzmark{b23} ~F_2(F_3(F_1W_2)) \\
\cdots
\end{array}\\ \hline
$\vdots$&
$~~~~~\vdots$
&
 $~~~~~\vdots$   \\ \hline
M&
\begin{array}{l}
 F_1W_M~\tikzmark{a2T}\\
F_3{\color{red} Z_{M+1}}~\tikzmark{c2T}
\end{array}
&
\begin{array}{l}
\tikzmark{b2T}~\cdots \\
\tikzmark{d2T}~  F_3(F_1W_M \oplus {\color{red} Z_{M+1}})
\end{array}\\ \hline
M+1&
\begin{array}{l}
 F_1{\color{red} Z_{M+2}}~\tikzmark{a2T1}
\end{array}
&
\begin{array}{l}
\tikzmark{b2T1}~F_2(F_3(F_1W_M))
\end{array}\\ \hline
\end{array}
~
\begin{array}{|c|ll|} 
\multicolumn{3}{c}{\sigma=(2~1~3)} \\ \hline
\mbox{\tiny{BLOCK}}&\mbox{\tiny ~~~~~~SERVER $1$}&\mbox{\tiny ~~~~~~SERVER $2$}\\\hline
1&
\begin{array}{l}
 F_1{\color{red} Z_1}~\tikzmark{a212}\\
F_3(W_1 \oplus {\color{red} Z_3})~\tikzmark{c212}
\end{array}
&
\begin{array}{l}
 \tikzmark{b212}~F_2{\color{red} Z_2} \\
\tikzmark{d212}~  F_3{\color{red} Z_3}
\end{array}\\ \hline
2&
\begin{array}{l}
 F_1(F_3W_1)~~~~~\tikzmark{a222}\\
F_3(W_2 \oplus {\color{red} Z_4})~\tikzmark{c222}
\end{array}
&
\begin{array}{l}
\tikzmark{b222}~F_2(F_1(F_3W_1)) \\
\tikzmark{d222}~  F_3{\color{red} Z_4}
\end{array}\\ \hline
3&
\begin{array}{l}
 F_1(F_3W_2)~~~~~\tikzmark{a232} \\
\cdots
\end{array}
&
\begin{array}{l}
~\tikzmark{b232} ~F_2(F_1(F_3W_2)) \\
\cdots
\end{array}\\ \hline
$\vdots$&
$~~~~~\vdots$
&
 $~~~~~\vdots$   \\ \hline
M&
\begin{array}{l}
\cdots~\tikzmark{a2T2}\\
F_3(W_M\oplus {\color{red} Z_{M+2}})~\tikzmark{c2T2}
\end{array}
&
\begin{array}{l}
\tikzmark{b2T2}~\cdots \\
\tikzmark{d2T2}~  F_3{\color{red} Z_{M+2}}
\end{array}\\ \hline
M+1&
\begin{array}{l}
 F_1(F_3W_M)~~~~~~~~\tikzmark{a2T12}
\end{array}
&
\begin{array}{l}
\tikzmark{b2T12}~F_2(F_1(F_3W_M))
\end{array}\\ \hline
\end{array}
\end{eqnarray*}
\begin{tikzpicture}[overlay, remember picture]
    \draw [black, thick,  rounded corners=8pt]({pic cs:b21})  ->   ({pic cs:a21})  ->   ({pic cs:c21}) ->  ({pic cs:d21}) ->  ({pic cs:b22})   ->
     ({pic cs:a22})  ->      ({pic cs:c22})   -> ({pic cs:d22}) -> ({pic cs:b23})  ;
     \draw [black, thick, ->, rounded corners=8pt] ({pic cs:a2T})  ->  ({pic cs:c2T}) ->  ({pic cs:d2T}) ->   ({pic cs:b2T1}) ->  ({pic cs:a2T1}) ;
    \draw [black, thick,  rounded corners=8pt] ({pic cs:a212})  -- ({pic cs:b212})  --   ({pic cs:d212}) --    ({pic cs:c212}) --   ({pic cs:a222})  -- ({pic cs:b222})   --
          ({pic cs:d222})   -- ({pic cs:c222}) -- ({pic cs:a232}) --  ({pic cs:b232})  ;
     \draw [black, thick,->,  rounded corners=8pt]   ({pic cs:d2T2}) -- ({pic cs:c2T2}) --    ({pic cs:a2T12}) ->  ({pic cs:b2T12}) ;
     \end{tikzpicture}


  \begin{eqnarray*}
  \begin{array}{|c|ll|} 
\multicolumn{3}{c}{\sigma=(1~3~2)} \\ \hline
\mbox{\tiny{BLOCK}}&
\mbox{\tiny ~~~~~~SERVER $1$}
&\mbox{\tiny SERVER $2$}\\\hline
1&
\begin{array}{l}
 F_1{\color{red} Z_1}~\tikzmark{a312}\\
F_3(F_2W_1 \oplus {\color{red} Z_2})~\tikzmark{c312}
\end{array}
&
\begin{array}{l}
 \tikzmark{b312}~F_2W_1 \\
\tikzmark{d312}~  F_3{\color{red} Z_2}
\end{array}\\ \hline
2&
\begin{array}{l}
 F_1(F_3(F_2W_1))~\tikzmark{a322}\\
F_3(F_2W_2 \oplus {\color{red} Z_3})~\tikzmark{c322}
\end{array}
&
\begin{array}{l}
\tikzmark{b322}~F_2W_2 \\
\tikzmark{d322}~  F_3{\color{red} Z_3}
\end{array}\\ \hline
3&
\begin{array}{l}
 F_1(F_3(F_2W_2))~\tikzmark{a332} \\
\cdots
\end{array}
&
\begin{array}{l}
\cdots \\
\cdots
\end{array}\\ \hline
$\vdots$&
$~~~~~\vdots$
&
 $~~~~~\vdots$   \\ \hline
M&
\begin{array}{l}
\cdots~\tikzmark{a3T2}\\
F_3(F_2W_M \oplus {\color{red} Z_{M+1}})~\tikzmark{c3T2}
\end{array}
&
\begin{array}{l}
\tikzmark{b3T2}~F_2W_M\\
\tikzmark{d3T2}~  F_3{\color{red} Z_{M+1}}
\end{array}\\ \hline
M+1&
\begin{array}{l}
 F_1(F_3F_2(W_M))~~~~~\tikzmark{a3T12}
\end{array}
&
\begin{array}{l}
\tikzmark{b3T12}~F_2{\color{red} Z_{M+2}}
\end{array}\\ \hline
\end{array}
~
\begin{array}{|c|ll|} 
\multicolumn{3}{c}{\sigma=(1~2~3)} \\ \hline
\mbox{\tiny{BLOCK}}&\mbox{\tiny ~~~~~~SERVER $1$}&\mbox{\tiny~~~~~~ SERVER $2$}\\\hline
1&
\begin{array}{l}
 F_1{\color{red} Z_2}~\tikzmark{a31}\\
F_3{\color{red} Z_3}~\tikzmark{c31}
\end{array}
&
\begin{array}{l}
 \tikzmark{b31}~F_2{\color{red} Z_1} \\
\tikzmark{d31}~  F_3(W_1\oplus {\color{red} Z_3})
\end{array}\\ \hline
2&
\begin{array}{l}
 F_1(F_2(F_3W_1))~\tikzmark{a32}\\
F_3{\color{red} Z_4}~~~~~~~~~~~~\tikzmark{c32}
\end{array}
&
\begin{array}{l}
\tikzmark{b32}~F_2(F_3W_1) \\
\tikzmark{d32}~  F_3(W_2 \oplus {\color{red} Z_4})
\end{array}\\ \hline
3&
\begin{array}{l}
F_1(F_2(F_3W_2))~\tikzmark{a33}  \\
\cdots
\end{array}
&
\begin{array}{l}
~\tikzmark{b33} ~F_2(F_3W_2) \\
\cdots
\end{array}\\ \hline
$\vdots$&
$~~~~~\vdots$
&
 $~~~~~\vdots$   \\ \hline
M&
\begin{array}{l}
\cdots~\tikzmark{a3T}\\
F_3{\color{red} Z_{M+2}}~\tikzmark{c3T}
\end{array}
&
\begin{array}{l}
\tikzmark{b3T}~\cdots \\
\tikzmark{d3T}~  F_3(W_M \oplus {\color{red} Z_{M+2}})
\end{array}\\ \hline
M+1&
\begin{array}{l}
 F_1(F_2(F_3W_M))~\tikzmark{a3T1}
\end{array}
&
\begin{array}{l}
\tikzmark{b3T1}~F_2(F_3W_M)
\end{array}\\ \hline
\end{array}
\end{eqnarray*}
\begin{tikzpicture}[overlay, remember picture]
    \draw [black, thick,  rounded corners=8pt]({pic cs:b31})  ->   ({pic cs:a31})  ->   ({pic cs:c31}) ->  ({pic cs:d31}) ->  ({pic cs:b32})   ->
     ({pic cs:a32})  ->      ({pic cs:c32})   -> ({pic cs:d32}) -> ({pic cs:b33})  -> ({pic cs:a33})  ;
     \draw [black, thick, ->, rounded corners=8pt]  ({pic cs:c3T}) ->  ({pic cs:d3T}) ->   ({pic cs:b3T1}) ->  ({pic cs:a3T1}) ;
    \draw [black, thick,  rounded corners=8pt] ({pic cs:a312})  -- ({pic cs:b312})  --   ({pic cs:d312}) --    ({pic cs:c312}) --   ({pic cs:a322})  -- ({pic cs:b322})   --
          ({pic cs:d322})   -- ({pic cs:c322}) -- ({pic cs:a332}) ;
     \draw [black, thick,->,  rounded corners=8pt]   ({pic cs:b3T2})  -> ({pic cs:d3T2}) -- ({pic cs:c3T2}) --    ({pic cs:a3T12}) ->  ({pic cs:b3T12}) ;
     \end{tikzpicture}

Let us describe the solution for the case $\sigma=(3~2~1)$ as an example. From the table, one can easily infer the order of queries is $(\textrm{Server}\ 1,  W_1, F_1) \to (\textrm{Server}\ 2, F_1W_1, F_2) \to (\textrm{Server}\ 1, Z_1, F_3) \to (\textrm{Server}\ 2,F_2(F_1W_1) \oplus { Z_1},  F_3)  \to (\textrm{Server}\ 1, W_2,  F_1) \to (\textrm{Server}\ 2,  F_1W_2, F_2) \to (\textrm{Server}\ 1, Z_2, F_3) \to (\textrm{Server}\ 2,  F_2(F_1W_2) \oplus {Z_2}, F_3) \to \ldots$. We note that the input of each query is a function of the outputs of the previous queries.  In addition, the user can form these queries without any  knowledge about the basic functions, and without any matrix multiplication. 
In addition, in the end, the user can calculate $F_3(F_2(F_1W_m))$ by subtracting $F_3 Z_m$ from $F_3( F_2(F_1W_m) \oplus Z_m)$, for $m=1, \ldots M$,  thus correctness is verified. Similarly, the correctness for all six order of desired computation can be verified. 

It remains to establish privacy for the proposed solution. First let us have  two more observations: 
\begin{itemize}
\item  Given the full knowledge of $F_1, F_2, F_3$, the inputs to each server in different queries are independent.  This means that there is no possibility of reverse computing and inferring about the order of computation. 

\item 
In all six options for $\sigma$, the order of computation in server one is $\underbrace{F_1 \to F_3\to F_1 \to\cdots \to F_1}_{2M+1}$, and the order of computation in server two $\underbrace{F_2\to F_3\to F_2 \to \cdots \to F_2}_{2M+1}$. Since the sequence of the computed functions by each server is predetermined and fixed, and is \emph{not} a function of $\sigma$, it cannot reveal any information about the  desired order of computations. 
\end{itemize}

We summarize the above six solutions for the six different options for $\sigma$ as follows: 
\begin{eqnarray*}
\begin{array}{|c|cc|} 
\multicolumn{3}{c}{} \\ \hline
\mbox{\tiny{BLOCK}}&\mbox{\tiny SERVER $1$}&\mbox{\tiny SERVER $2$}\\\hline
1&
 \begin{array}{l}
 F_1\\
 F_3
\end{array}
&
\begin{array}{l}
F_2 \\
F_3 
\end{array}\\ \hline
2&
 \begin{array}{l}
 F_1\\
 F_3
\end{array}
&
\begin{array}{l}
F_2 \\
F_3 
\end{array}\\ \hline
$\vdots$&
$~~~~~\vdots$
& 
$~~~~~\vdots$\\ \hline
M+1&
 \begin{array}{l}
 F_1
\end{array}
&
\begin{array}{l}
F_2
\end{array}\\ \hline
\end{array} 
\end{eqnarray*}


After showing that the proposed scheme has the privacy and correctness properties, we  compute its rate.  In total, there are $4M+2$  queries, while the user wants to compute three functions on $M$ input vectors. Hence, the rate of computation is $\frac{3M}{4M+2}$. As $M \to \infty$, this rate achieves the claimed lower bound on the capacity of the problem, which is equal to $\frac{3}{4}$.
\end{example}

\begin{example}[$K=4$ and  $N=3$]\label{exmpl_3} \normalfont
Example \ref{exmpl_2}  shows that one may use a  predetermined sequence of functions to be computed at each  server, and also use some random vectors, in order to achieve the  lower bound of Theorem~\ref{thrm1} on the capacity in the asymptotic regimes of large $M$. However, we require a systematic  approach to generalize that idea for arbitrary $K,N$. We explain the proposed approach in this  example. Note that this method is slightly different from the previous example.

Consider the problem of private sequential function computation with $N=3$ servers and $K=4$ functions\footnote{
In this example, we apply minor changes on the notations described in Section \ref{prob_state}. These changes are explained through the example.
}. Assume that the user has $M=2M'$ input vectors, denoted by $W_{m,j}$, $m \in [1:M], j \in \{1,2\}$. The reason that we use two indices $(m,j)$ for the inputs is to simplify the explanation.  In addition, we assume that the user wants to compute $F_{\sigma_4}(F_{\sigma_3}(F_{\sigma_2}(F_{\sigma_1}W_{1:M,1:2})))$, i.e., $\sigma=({\sigma_4}~{\sigma_3}~{\sigma_2}~ {\sigma_1})$.

We first  assign a fixed order of computations to each server. In particular, we assign the following order of computation to server  $n$ , $n \in \{1,2,3\}$, 
 \begin{align}
\underbrace{F_n\to F_n \to F_4  \to F_n \to F_n \to F_4 \to \cdots \to F_n\to F_n \to F_4}_{3(M+3)}\nonumber,
\end{align}
no matter what the desired order of composition $\sigma=({\sigma_4}~{\sigma_3}~{\sigma_2}~ {\sigma_1})$ is. 
This means that to compute $M=2M'$ requests, the user asks $3(M'+3)$ queries from each server, and $9M'+27$ queries in total. We call this step as \textit{function assignment}.  We illustrate the proposed \textit{function assignment}, universally for all possible $\sigma$,   as follows:

\begin{eqnarray*}
\begin{array}{|c|ccc|} 
\multicolumn{4}{c}{} \\ \hline
\mbox{\tiny{BLOCK}}&\mbox{\tiny SERVER $1$}&\mbox{\tiny SERVER $2$}&\mbox{\tiny SERVER $3$}\\\hline
1&
\begin{array}{l}
 F_1\\
 F_1\\
 F_4
\end{array}
&
\begin{array}{l}
F_2 \\
F_2\\
F_4
\end{array}&
\begin{array}{l}
F_3 \\
F_3\\
F_4
\end{array}\\ \hline
2&
\begin{array}{l}
 F_1\\
 F_1\\
 F_4
\end{array}
&
\begin{array}{l}
F_2 \\
F_2\\
F_4
\end{array}&
\begin{array}{l}
F_3 \\
F_3\\
F_4
\end{array}\\ \hline
$\vdots$&
$\vdots$
&
 $\vdots$ &
 $\vdots$  \\ \hline
M'+3&
\begin{array}{l}
 F_1\\
 F_1\\
 F_4
\end{array}
&
\begin{array}{l}
F_2 \\
F_2\\
F_4
\end{array}&
\begin{array}{l}
F_3 \\
F_3\\
F_4
\end{array}\\ \hline
\end{array}
\end{eqnarray*}

For the convenience in the description of the achievable scheme, we divide the computations of $3(M'+3)$ functions at each server to $M'+3$ \textit{blocks of computations}, each comprised of three computations per server. 
It can be observed that the function assignment is the same for all the blocks. Since the servers are not synchronized, we need to also assign an order for asking the queries based on the proposed function assignment. 
In the proposed scheme, we assume that the blocks of computations are requested by the user sequentially. This means that the user first asks all of the queries in a block from the servers, then begins the next block.  
Also assume that the queries in each block are asked by the user in an arbitrary order. 
We claim that by exploiting this function assignment and order determination for asking queries, one can design an achievable scheme for any desired permutation. 
Note that in this case, there are $4!=24$ distinct permutations. As examples, we illustrate the achievable scheme on two specific permutations.

\subsubsection{$\sigma = (1~ 3~ 4~2)$}

In the following figure, we  propose the procedure of computations in the first four  blocks. In this figure, we use a new notation for some variables $Z_*$. There is an important remark about $Z_*$.

\begin{remark} \normalfont
In each block of the scheme, we observe variables $Z_*$ in seven different places. These seven variables in each block named as $Z_*$ are essentially \emph{different}, drawn \emph{randomly,  uniformly, and independently}  from each other and from all other random variables in the solution. However, since each of them are not appear in the scheme again,  and to avoid confusing the reader by introducing a long list of random variables, they are not denoted differently. We use these random variables to keep the list of queries to each server for different choices of $\sigma$ identically distributed. 
\end{remark} 

In the first block, the requests corresponded to  $W_{1,\{1,2\}}$ are considered and the  function $F_2$ is applied on them (at server $2$).  
In the second block, the user has access to  $F_2W_{1,\{1,2\}}$, and asks the servers one and server two to apply the  function $F_4$ on them. To run $F_4$, the user utilizes a randomly drawn vector to ensure the privacy of the computation. Also, the user in this block again asks from the second server to perform a similar task to the previous block on new vectors $W_{2,\{1,2\}}$. The third block is also similar. The  function $F_3$ is executed on  the requests corresponded to 
$W_{1,\{1,2\}}$, the function $F_4$ is computed for the requests corresponded to $W_{2,\{1,2\}}$, and the  function $F_2$ is computed for the requests corresponded to $W_{3,\{1,2\}}$. 
In the fourth block, the user again asks three servers to compute specific functions similar to the third block. In addition, for the requests corresponded to $W_{1,\{1,2\}}$, the  function $F_1$ is computed and the final result of computation for them is available at the end of this block.  
The rest of the scheme is similar, where $M=2M'$ requests are computed in the $M'+3$ blocks of computations.  

Observe that the scheme is correct and private. The privacy is due to the fact that all the inputs given to each server are uniformly and independently drawn vectors, which are independent from the desired permutation of the user. Also the order of computations at each server is the same for all desired permutations. This can be confirmed by comparing the schemes for $\sigma=(1~3~4~2)$ and $\sigma=(4~3~2~1)$ (see Subsection~\ref{subsec4321}). To derive the scheme for other choices of $\sigma$, refer to the general achievable scheme.

\begin{eqnarray*}
\begin{array}{|c|ccc|} 
\multicolumn{4}{c}{\sigma=(1~3~4~2)} \\ \hline
\mbox{\tiny{BLOCK}}&\mbox{\tiny SERVER $1$}&\mbox{\tiny SERVER $2$}&\mbox{\tiny SERVER $3$}\\\hline
$1$&
\begin{array}{l}
 F_1 {\color{red} Z_*}\\
 F_1 {\color{red} Z_*}\\
 F_4   {\color{red} Z_*}
\end{array}
&
\begin{array}{l}
F_2 {\color{blue} W_{1,1}} \\
F_2 {\color{blue}W_{1,2}} \\
F_4   {\color{red} Z_*}
\end{array}&
\begin{array}{l}
F_3 {\color{red} Z_*}\\
F_3  {\color{red} Z_*}\\
F_4  {\color{red} Z_*}
\end{array}\\ \hline
$2$&
\begin{array}{l}
 F_1 {\color{red} Z_*}\\
 F_1 {\color{red} Z_*}\\
F_4 ( {\color{blue} F_2  { W_{1,1}}}\oplus {\color{red} Z_2})
\end{array}
&
\begin{array}{l}
F_2 {\color{PineGreen} W_{2,1}} \\
F_2 {\color{PineGreen}W_{2,2}} \\
F_4 ( {\color{blue} F_2  { W_{1,2}}}\oplus {\color{red} Z_2})
\end{array}&
\begin{array}{l}
F_3 {\color{red} Z_*}\\
F_3  {\color{red} Z_*}\\
F_4  {\color{red} Z_2} 
\end{array}\\ \hline
$3$&
\begin{array}{l}
 F_1 {\color{red} Z_*}\\
 F_1 {\color{red} Z_*}\\
F_4 ( {\color{PineGreen} F_2  { W_{2,1}}}\oplus {\color{red} Z_3})
\end{array}
&
\begin{array}{l}
F_2 {\color{Plum} W_{3,1}} \\
F_2 {\color{Plum}W_{3,2}} \\
F_4 ( {\color{PineGreen} F_2  { W_{2,2}}}\oplus {\color{red} Z_3})
\end{array}&
\begin{array}{l}
F_3 { \color{blue}(F_4(F_2W_{1,1} ))) }\\
F_3 { \color{blue}(F_4(F_2W_{1,2} ))) }\\
F_4  {\color{red} Z_3} 
\end{array}\\ \hline
$4$&
\begin{array}{l}
 F_1 {\color{blue} F_3(F_4(F_2 W_{1,1}))}\\
 F_1 {\color{blue} F_3(F_4(F_2 W_{1,2}))}\\
F_4 ( {\color{Plum} F_2  { W_{3,1}}}\oplus {\color{red} Z_4})
\end{array}
&
\begin{array}{l}
F_2 {\color{BrickRed} W_{4,1}} \\
F_2 {\color{BrickRed}W_{4,2}} \\
F_4 ( {\color{Plum} F_2  { W_{3,2}}}\oplus {\color{red} Z_4})
\end{array}&
\begin{array}{l}
F_3 { \color{PineGreen}(F_4(F_2W_{2,1} ))) }\\
F_3 { \color{PineGreen}(F_4(F_2W_{2,2} ))) }\\
F_4  {\color{red} Z_4} 
\end{array}\\ \hline
 \end{array}\\
\end{eqnarray*}

Now we compute the rate of the proposed scheme. Note that there are $2M'$ input vectors and we have $4$ functions ($8M'$ in total), while the user utilizes  the servers for $9M'+27$ times. 
Therefore, the rate of the proposed achievable scheme is $\frac{8M'}{9M'+27}$, which goes to $\frac{8}{9}$ as $M \to \infty$\footnote{
We note that this rate is not necessarily   optimum for the one-shot case and one may  compute privately $2M'$ requests with less than $9M'+27$ computations. However, in the asymptotic regime, the rate achieves the claimed lower bound on the capacity and the gap is vanishing. 
}. 
%

\subsubsection{$\sigma=(4~3~2~1)$}
\label{subsec4321}

To observe that the proposed approach works for any permutation, we consider another permutation here and construct a similar scheme for this case. The proposed scheme has been detailed in the following table.

\begin{eqnarray*}
\begin{array}{|c|ccc|} 
\multicolumn{4}{c}{\sigma=(4~3~2~1)} \\ \hline
\mbox{\tiny{BLOCK}}&\mbox{\tiny SERVER $1$}&\mbox{\tiny SERVER $2$}&\mbox{\tiny SERVER $3$}\\\hline
$1$&
\begin{array}{l}
 F_1{\color{blue} W_{1,1}}\\
 F_1{\color{blue}W_{2,1}}\\
 F_4   {\color{red} Z_*}
\end{array}
&
\begin{array}{l}
F_2  {\color{red} Z_*} \\
F_2 {\color{red} Z_*} \\
F_4   {\color{red} Z_*}
\end{array}&
\begin{array}{l}
F_3 {\color{red} Z_*}\\
F_3  {\color{red} Z_*}\\
F_4  {\color{red} Z_*}
\end{array}\\ \hline
$2$&
\begin{array}{l}
 F_1{\color{PineGreen} W_{2,1}}\\
 F_1{\color{PineGreen}W_{2,2}}\\
 F_4   {\color{red} Z_*}
\end{array}
&
\begin{array}{l}
F_2  {\color{blue} (F_1W_{1,1})} \\
F_2  {\color{blue} (F_2W_{1,2})} \\
F_4   {\color{red} Z_*}
\end{array}&
\begin{array}{l}
F_3 {\color{red} Z_*}\\
F_3  {\color{red} Z_*}\\
F_4  {\color{red} Z_*}
\end{array}\\ \hline
$3$&
\begin{array}{l}
 F_1{\color{Plum} W_{3,1}}\\
 F_1{\color{Plum}W_{3,2}}\\
 F_4   {\color{red} Z_*}
\end{array}
&
\begin{array}{l}
F_2  {\color{PineGreen} (F_1W_{2,1})} \\
F_2  {\color{PineGreen} (F_2W_{2,2})} \\
F_4   {\color{red} Z_*}
\end{array}&
\begin{array}{l}
F_3 {\color{blue} ( F_2  { (F_1W_{1,1})})}\\
F_3 {\color{blue} ( F_2  { (F_1W_{1,2})})}\\
F_4  {\color{red} Z_*}
\end{array}\\ \hline
$4$&
\begin{array}{l}
 F_1{\color{BrickRed} W_{4,1}}\\
 F_1{\color{BrickRed}W_{4,2}}\\
 F_4 ( {\color{blue}F_3  ( F_2  { (F_1W_{1,1})})}\oplus {\color{red} Z_4})
\end{array}
&
\begin{array}{l}
F_2  {\color{Plum} (F_1W_{3,1})} \\
F_2  {\color{Plum} (F_2W_{3,2})} \\
F_4 ( {\color{blue}F_3  ( F_2  { (F_1W_{1,2})})}\oplus {\color{red} Z_4})
\end{array}&
\begin{array}{l}
F_3 {\color{PineGreen} ( F_2  { (F_1W_{2,1})})}\\
F_3 {\color{PineGreen} ( F_2  { (F_1W_{2,2})})}\\
F_4  {\color{red} Z_4}
\end{array}\\ \hline
 \end{array}
\end{eqnarray*}

We now review the proposed scheme in Example \ref{exmpl_3}, and define some new notations.  The new notations will help us to generalize it in the next section.  We note that for any set of vectors $W_{m,\{1,2\}}$, we first need to compute $F_{\sigma_1}$, then $F_{\sigma_2}$, up to $F_{\sigma_K}$. Let us denote formally the process of computation of the $k^{th}$ function $F_{\sigma_k}$  for the requests corresponded to $W_{m,\{1,2\}}$, by $\calR_m^k$. 
As discussed, in  Example \ref{exmpl_3},  the scheme  utilizes $M'+3$  blocks of computations,  and  performs all the tasks $\calR_m^k$, for $k\in \{1,2,3,4\}$ and $m \in [1:M']$.

In the formal description in above, we assigned the task $\calR_1^1$ to the first block. This means that at the end of the first block,  $F_{\sigma_1}W_{1,\{1,2\}}$ is available at the user side. 
For the second block, we assigned two tasks $\calR_2^1$ and $\calR_1^2$. This means that at the end of this block, the user has access to  $F_{\sigma_1}W_{2,\{1,2\}}$ and $F_{\sigma_2}(F_{\sigma_1}(W_{1,\{1,2\}}))$. 
Generally, before  the $m^{th}$ block, the tasks $\calR_{m'}^{k'}$, for $k'+m' \le m $, have already been  performed, and the tasks  $\calR_{m'}^{k'}$, for $k'+m' = m+1 $,  will be performed at the $m^{th}$ block. By this explanation, one can see that if there is a computation scheme that can perform all of the above tasks, then it is $2M'-$achievable. 
In  Example \ref{exmpl_3}, it is shown that such  computation scheme   exists. 
We will  generalize this approach for arbitrary $K,N$  in the next section to construct an achievable scheme\footnote{
We used the notion of task here only to explain our achievable scheme more apprehensible. We note that there is not any necessity for the user to achieve the outputs of the  tasks. The only necessity is to find the target result with high accuracy; for more explanations see Section \ref{prob_state}.  
}.

\end{example}

 \section{The Achievable Scheme}\label{achieve}

 \subsection{Preliminaries}
 
 In this section, we propose an achievable scheme for arbitrary $K,N$. We note that as shown in Example \ref{exmpl_1}, one can simply design an achievable scheme when $K \le N$. Hence, throughout this section, we focus on the cases with  $K > N$. 
Also, with a minor change in  assumptions described in Section \ref{prob_state}, we assume that there are $M = M'(N-1)$ requests in the system, i.e., the number of requests is divisible by $N-1$ . In other words, to propose an achievable scheme, we assume that the number of requests  are divisible by $N-1$. Later, we will argue that this assumption does not affect the the rate in the asymptotic regimes. 
We further divide the given $M=M'(N-1)$ input vectors  to $M'$ batches of $N-1$ input vectors. For each $m \in [1:M']$, we denote the input vectors in the $m^{th}$ batch by  $W_{m,j}$, $j\in [1:N-1]$.

To ensure the privacy and correctness constraints, we rely on the following two ideas: 
\begin{itemize}
\item The index of functions to be computed by each server is assigned in a deterministic manner, no matter what permutation is to be computed. We refer to this task as \textit{function assignment} in the paper. We use the notion of \textit{block of computations}, similar to Section \ref{motivation}, to propose an appropriate function assignment. 

\item
Whenever required, the user exploits randomly generated vectors to ensure  privacy.
\end{itemize}

In the following subsections, we first propose a deterministic function assignment, and then we design an appropriate task assignment. After these steps, we propose a \textit{vector assignment}, which is defined as the process of  associating  appropriate vectors to be transmitted from the user to the servers in queries. 

 \subsection{Function Assignment}

First we define the notion of \textit{block of computations} as follows.
 
 \begin{definition}
 A block of the computations for a private sequential function computation problem with $N$ servers and $K$ functions is defined as

\begin{eqnarray*}
\begin{array}{|c|ccccc|} \hline
 \mbox{\tiny{PHASE}}& \mbox{\tiny{NUMBER}} & \mbox{\tiny ~~SERVER $1$}&\mbox{\tiny ~~SERVER $2$}&\mbox{\tiny $\cdots$}&\mbox{\tiny ~~SERVER $N$}\\\hline
 1 & 
\begin{array}{c}
1\\
2\\
 $\vdots$\\
N-1
\end{array}&\cellcolor{yellow!35}
\begin{array}{c}
 F_1\\
 F_1\\
 $\vdots$\\
 F_{1} 
 \end{array} &\cellcolor{yellow!35}
\begin{array}{c}
F_2 \\
F_2\\
$\vdots$\\
F_2
\end{array} &\cellcolor{yellow!35}
\begin{array}{c}
\cdots \\
\cdots \\
\ddots \\
\cdots
\end{array}&\cellcolor{yellow!35}
\begin{array}{c}
F_N\\
F_{N}\\
$\vdots$\\
F_N
\end{array} \\ \hline 
 $2$  & 
\begin{array}{c}
 N\\
N+1\\
$\vdots$\\
K-1
\end{array}& \cellcolor{orange!35}
\begin{array}{c}
 F_{N+1}\\
  F_{N+2}\\
  $\vdots$\\
   F_{K}
\end{array}& \cellcolor{orange!35}
\begin{array}{c}
 F_{N+1}\\
  F_{N+2}\\
  $\vdots$\\
   F_{K}
\end{array} & \cellcolor{orange!35}
\begin{array}{c}
\cdots \\
\cdots \\
\ddots \\
\cdots 
\end{array}&  \cellcolor{orange!35}
\begin{array}{c}
 F_{N+1}\\
  F_{N+2}\\
  $\vdots$\\
   F_{K}
\end{array}\\ \hline
 \end{array}
\end{eqnarray*}
%
As shown above, each block contains two phases. In the first phase, each server computes a  specific function, for $N-1$ times, and in the second phase, all servers compute $K-N$similar functions. 
 \end{definition}
 
Now, we are ready to propose the function assignment of the achievable scheme. In the achievable scheme, we utilize a deterministic function assignment, which comprised of $M'+K-1$ replicated blocks of computations a described above. To determine the order of queries asked by the user, we use the following rules:
 \begin{itemize}
\item The queries are asked block by block, i.e., the user first asks all of the queries of the first block, then  asks all of the queries of the second block, and so on. 
\item At each block, the user asks the queries by an arbitrary order. As we will show, all the  vectors send by the user to the servers at each block are available at the user side before the block begins.

\end{itemize}

 \subsection{Vector  Assignment}\label{inputassign}

 Let us first introduce some useful  notations. 
 
 \begin{definition}
 Define
 \begin{align}
\In_i(\calR_m^k):= F_{\sigma_{k-1}}(F_{\sigma_{k-2}}(\cdots (F_{\sigma_{1}}W_{m,i})\cdots))
\end{align} and
\begin{align}
\out_i(\calR_m^k):= F_{\sigma_{k}}(F_{\sigma_{k-1}}(\cdots (F_{\sigma_{1}}W_{m,i})\cdots))
\end{align}  
for any  $m\in [1:M'],k \in [1:K]$, and $i \in [1:N-1]$.  In other words, we denote the procedure of computation of the $k^{th}$ function in the sequential computation  (i.e.,  $F_{\sigma_k}$) for the requests corresponded to the $m^{th}$ batch by $\calR_m^k$, and the inputs and outputs for this task are $\In_i(\calR_m^k)$, $\out_i(\calR_m^k)$, respectively.

 \end{definition}

 \begin{corollary}
 \begin{align}
\out_i(\calR_m^k) &=F_{\sigma_k} \In_i(\calR_m^k)\\
& = \In_i(\calR_m^{k+1}). 
\end{align}
 \end{corollary}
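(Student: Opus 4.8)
The plan is to establish both equalities by directly unwinding the two definitions of $\In_i$ and $\out_i$; no machinery beyond substitution and the fact that each $F_k$ acts as a matrix (a linear map) is required, so the corollary is essentially a bookkeeping statement about an index shift in the chain of compositions.

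For the first equality, I would start from the defining expression
\begin{align}
\In_i(\calR_m^k) = F_{\sigma_{k-1}}(F_{\sigma_{k-2}}(\cdots (F_{\sigma_{1}}W_{m,i})\cdots)),
\end{align}
and left-apply the function $F_{\sigma_k}$ (equivalently, left-multiply by the matrix $F_{\sigma_k}$). The result is the $k$-fold composition $F_{\sigma_k}(F_{\sigma_{k-1}}(\cdots(F_{\sigma_1}W_{m,i})\cdots))$, which is verbatim the defining expression for $\out_i(\calR_m^k)$. For the boundary case $k=1$ the composition inside $\In_i$ is empty, so $\In_i(\calR_m^1)=W_{m,i}$ and the identity reduces to $\out_i(\calR_m^1)=F_{\sigma_1}W_{m,i}$, consistent with the definition.

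For the second equality, I would substitute $k+1$ in place of $k$ in the definition of $\In_i$; the topmost subscript then becomes $(k+1)-1=k$, so
\begin{align}
\In_i(\calR_m^{k+1}) = F_{\sigma_{k}}(F_{\sigma_{k-1}}(\cdots(F_{\sigma_1}W_{m,i})\cdots)),
\end{align}
which coincides term-for-term with the defining expression for $\out_i(\calR_m^k)$ obtained above. This step is valid precisely when $k+1\le K$, i.e.\ for $k\in[1:K-1]$, since $\calR_m^{K+1}$ is undefined; the final output $\out_i(\calR_m^{K})$ has no successor and is exactly the desired end result $F_{\sigma_K}\cdots F_{\sigma_1}W_{m,i}$.

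There is no genuine obstacle here: the only points demanding care are the index arithmetic — verifying that the top-level function in $\In_i(\calR_m^{k+1})$ is $F_{\sigma_k}$ and not $F_{\sigma_{k+1}}$ — and the correct interpretation of the empty composition at $k=1$. Both equalities are immediate once the definitions are written side by side.
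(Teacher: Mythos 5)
Your proof is correct and takes exactly the approach the paper intends: the corollary is stated without proof precisely because it follows by unwinding the two definitions, which is what you do. Your extra care with the boundary cases ($k=1$ as the empty composition and $k=K$ having no successor, so the second equality applies only for $k\in[1:K-1]$) is a welcome clarification, not a deviation.
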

Clearly,  the task $\calR_m^k$ is said to be performed whenever the value  of $\out_{1:N-1}(\calR_m^k)$ is available at the user side.

The last step to explain the achievable scheme is to define the vectors transmitted from the user to the servers in each step.  Let $\pi$ denote the unique permutation $\pi \in \calS_K$ such that $\sigma \pi = \pi \sigma $ is the identity permutation. In the proposed achievable scheme,  for each $m \in [1:M'+K-1]$, at the $m^{th}$ block of computations, the user asks the $n^{th}$ server to perform the computation on the following vectors:

\begin{center}
\begin{tabular}{|c|c|cc|}
  \hline
  \mbox{\tiny BLOCK} & \mbox{\tiny PHASE} &  \mbox{\tiny SERVER $1\le n \le N-1$}  &  \mbox{\tiny SERVER $N$} \\ \hline 
  $\vdots$ &$\vdots$ &$\vdots$ &$\vdots$ \\ \hline
 \multirow{2}[15]{*}{$m$}& 1 & 
 {\cellcolor{yellow!35} 
$\begin{array}{c} 
{\color{black} \In_1(\calR_{m-\pi_n+1}^{\pi_n})}\\
 {\color{black} \In_2(\calR_{m-\pi_n+1}^{\pi_n})}\\
 \vdots \\
 {\color{black} \In_{N-1}(\calR_{m-\pi_n+1}^{\pi_n})}
\end{array}$
} & {\cellcolor{yellow!35} 
$\begin{array}{c} 
{\color{black} \In_1(\calR_{m-\pi_N+1}^{\pi_N})}\\
 {\color{black} \In_2(\calR_{m-\pi_N+1}^{\pi_N})}\\
 \vdots \\
 {\color{black} \In_{N-1}(\calR_{m-\pi_N+1}^{\pi_N})}
\end{array}$
} \\ \cline{2-2} \cline{3-2}
 & 2 &  \cellcolor{orange!35}
  $\begin{array}{c} 
 {\color{black} \In_n(\calR_{m-\pi_{N+1}+1}^{\pi_{N+1}})} \oplus {\color{red}  Z_{t,1}}  \\ 
 {\color{black} \In_n(\calR_{m-\pi_{N+2}+1}^{\pi_{N+2}})} \oplus {\color{red}  Z_{t,2}}  \\
  \vdots \\
   {\color{black} \In_n(\calR_{m-\pi_{K}+1}^{\pi_{K}})} \oplus {\color{red}  Z_{m,K-N}} 
 \end{array}$ &  \cellcolor{Dandelion!25}
   $\begin{array}{c}
  {\color{red}  Z_{m,1}}  \\ 
  {\color{red}  Z_{m,2}}  \\
  \vdots \\
     {\color{red}  Z_{m,K-N}} 
 \end{array}$ 
  \\ \hline
    $\vdots$ &$\vdots$ &$\vdots$ & $\vdots$  \\ \hline
\end{tabular}
\end{center}
In above, at the first phase, the server $n$, receives the vectors $\In_i(\calR^{\pi_n}_{m-\pi_n+1})$, $i \in [1:N-1]$. In the second phase, the two cases $n<N$ and $n=N$ are different. If $n<N$, then the server receives $\In_n(\calR_{m-\pi_{N+i}+1}^{\pi_{N+i}}) \oplus  Z_{m,i}$, $i\in [1:K-N]$,  and if $n=N$, it receives $ Z_{m,i}$, $i\in [1:K-N]$. 
Note that the random vectors $Z_{m,i}$, $i \in [1:K-N]$, are distributed uniformly over $\mathbb{F}^L$, and they are independent from each other and from all the other random variables in the problem.
 
We note that the vectors $\In_i(\calR^{\pi_k}_{m-\pi_k+1})$ for each $i\in [1:N-1]$, $k\in [1:K]$, and $t\in [1:M'+K-1]$ are well defined, except for the cases that $m-\pi_k+1 \le 0 $.  For such cases, the user utilizes a uniformly drawn random vector, which is independent from all of the other variables in the problem, rather than  $\In_i(\calR^{\pi_k}_{m-\pi_k+1})$, in the vector assignment. See the vectors $Z_*$ in  Example \ref{exmpl_3} for more details. 

An important question regarding the proposed vector assignment is that whether it is feasible or not. More precisely, it is required to be proved that the inputs assigned to each block are available at the end of the previous block.  In the next section, we will prove the propose vector assignment is indeed feasible.

 In Figure \ref{fig:input}, we illustrate the proposed function assignment and vector assignment.

\begin{sidewaysfigure} 
\begin{eqnarray*}
\begin{array}{|c|ccccc|} 
\multicolumn{6}{c}{\sigma=(\sigma_K~\sigma_{K-1}~\cdots~\sigma_1),~ \sigma^{-1} = (\pi_K~\pi_{K-1}~\cdots~\pi_1) } \\ \hline
\mbox{\tiny{BLOCK}}&\mbox{\tiny SERVER $1$}&\mbox{\tiny \hspace{0.5cm} SERVER $2$}&\mbox{\tiny $\cdots$}&\mbox{\tiny SERVER $N-1$}&\mbox{\tiny SERVER $N$}\\\hline
$\vdots$&
$\vdots$
&
 $\vdots$ &
 $\vdots$ & 
 $\vdots$&
$\vdots$ \\ \hline
$m$&
\begin{array}{l}
 F_1{\color{blue} \In_1(\calR_{m-\pi_1+1}^{\pi_1})}\\
F_1{\color{blue} \In_2(\calR_{m-\pi_1+1}^{\pi_1})}\\
 \vdots \\ 
 F_1{\color{blue} \In_{N-1}(\calR_{m-\pi_1+1}^{\pi_1})}\\ 
 F_{N+1}  ( {\color{blue} \In_1(\calR_{m-\pi_{N+1}+1}^{\pi_{N+1}})} \oplus {\color{red}  Z_{m,1}}) \\ 
  F_{N+2}  ( {\color{blue} \In_1(\calR_{m-\pi_{N+2}+1}^{\pi_{N+2}})} \oplus {\color{red}  Z_{m,2}}) \\ 
  \vdots \\
   F_{K}  ( {\color{blue} \In_1(\calR_{m-\pi_{K}+1}^{\pi_{K}})} \oplus {\color{red}  Z_{m,K-N}})
\end{array}
&
\begin{array}{l}
 F_2{\color{PineGreen} \In_1(\calR_{m-\pi_2+1}^{\pi_2})}\\ 
F_2{\color{PineGreen} \In_2(\calR_{m-\pi_2+1}^{\pi_2})}\\
 \vdots \\
 F_2{\color{PineGreen} \In_{N-1}(\calR_{m-\pi_2+1}^{\pi_2})}\\ 
 F_{N+1}  ( {\color{PineGreen} \In_2(\calR_{m-\pi_{N+1}+1}^{\pi_{N+1}})} \oplus {\color{red}  Z_{m,1}}) \\ 
  F_{N+2}  ( {\color{PineGreen} \In_2(\calR_{m-\pi_{N+2}+1}^{\pi_{N+2}})} \oplus {\color{red}  Z_{m,2}}) \\ 
  \vdots \\ 
   F_{K}  ( {\color{PineGreen} \In_2(\calR_{m-\pi_{K}+1}^{\pi_{K}})} \oplus {\color{red}  Z_{m,K-N}})
\end{array}
&
\begin{array}{l}
\vdots \\
\end{array}&
\begin{array}{l}
 F_{N-1}{\color{purple} \In_1(\calR_{m-\pi_{N-1}+1}^{\pi_{N-1}})}\\ 
F_{N-1}{\color{purple} \In_2(\calR_{m-\pi_{N-1}+1}^{\pi_{N-1}})}\\ 
 \vdots \\
 F_{N-1}{\color{purple} \In_{N-1}(\calR_{m-\pi_{N-1}+1}^{\pi_{N-1}})}\\
 F_{N+1}  ( {\color{purple} \In_{N-1}(\calR_{m-\pi_{N+1}+1}^{\pi_{N+1}})} \oplus {\color{red}  Z_{m,1}}) \\
  F_{N+2}  ( {\color{purple} \In_{N-1}(\calR_{m-\pi_{N+2}+1}^{\pi_{N+2}})} \oplus {\color{red}  Z_{m,2}}) \\ 
  \vdots \\ 
   F_{K}  ( {\color{purple} \In_{N-1}(\calR_{m-\pi_{K}+1}^{\pi_{K}})} \oplus {\color{red}  Z_{m,K-N}})
\end{array}&
\begin{array}{l}
 F_{N}{\color{Blue} \In_1(\calR_{m-\pi_{K}+1}^{\pi_{K}})}\\
F_{N}{\color{Blue} \In_2(\calR_{m-\pi_{K}+1}^{\pi_{K}})}\\
 \vdots \\
 F_{N}{\color{Blue} \In_{N-1}(\calR_{m-\pi_{K}+1}^{\pi_{K}})}\\ 
 F_{N+1}  {\color{red}  Z_{m,1}} \\
  F_{N+2}   {\color{red}  Z_{m,2}} \\ 
  \vdots \\ 
   F_{K}  {\color{red}  Z_{m,K-N}}  
\end{array}\\ \hline
$\vdots$&
$\vdots$
&
 $\vdots$ &
 &
 $\vdots$
 &
 $\vdots$  \\ \hline
 \end{array}
\end{eqnarray*}
\caption{The proposed function assignment and vector assignment.}
\label{fig:input}
\end{sidewaysfigure}

 \section{Achievability Proof of Theorem \ref{thrm1}}\label{proof_achieve}

 In this section, we prove the achievability part of Theorem \ref{thrm1}. In particular, we focus on the proposed scheme described in the previous section.  We first prove that this scheme is feasible, i.e., the proposed vector assignment is feasible, meaning that the user has access to the contents required to be transmitted to the servers at the time of transmission (see Subsection \ref{inputassign}).  Then, we prove the correctness and privacy of the proposed scheme. 
 
Finally, we compute the rate of the achievable scheme,  and then we  prove the desired result. 
 
 \subsection{Proof of Feasibility and Correctness}
 
 In this part, we first prove that the proposed vector assignment is feasible to be performed by the user, and then, we establish the correctness of the scheme. 
 
 \begin{lemma}\label{lmma1}
 The proposed vector assignment is feasible, i.e,  the vectors  which are to be sent by the user at each block of computations to the servers can be obtained by computing a linear combination of the available outputs of the previous blocks. 
 \end{lemma}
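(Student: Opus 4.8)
The plan is to prove feasibility by induction on the block index $m$, carrying the invariant that after block $m$ has been executed the user possesses $\out_i(\calR_t^k)$ for every task with $t+k-1\le m$ (in addition to all raw inputs $W_{t,i}$ and all random vectors it has itself drawn). The bookkeeping fact I would establish first is that the function assignment and vector assignment are arranged so that \emph{task $\calR_t^k$ is completed exactly at block $t+k-1$}. Since $\pi=\sigma^{-1}$ we have $\sigma_{\pi_\ell}=\ell$, so the server (in phase one) or phase-two slot that is hard-wired to compute $F_\ell$ operates, at block $m$, on the input $\In_i(\calR^{\pi_\ell}_{m-\pi_\ell+1})$ and thereby produces $F_\ell\,\In_i(\calR^{\pi_\ell}_{m-\pi_\ell+1})=\out_i(\calR^{\pi_\ell}_{m-\pi_\ell+1})$ by the Corollary. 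Writing $t=m-\pi_\ell+1$ and $k=\pi_\ell$, this is $\out_i(\calR_t^k)$ with $t+k-1=m$, which pins down the completion time.

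For the inductive step I would verify that every vector the user must transmit at block $m$ is already available at the end of block $m-1$. In phase one, server $n$ is sent $\In_i(\calR^{\pi_n}_{m-\pi_n+1})$; by the Corollary this equals $\out_i(\calR^{\pi_n-1}_{m-\pi_n+1})$ when $\pi_n\ge 2$, whose completion time is $(m-\pi_n+1)+(\pi_n-1)-1=m-1$ and is therefore available by the induction hypothesis, while if $\pi_n=1$ it is simply the raw input $W_{m,i}$. In phase two, server $n<N$ is sent $\In_n(\calR^{\pi_{N+j}}_{m-\pi_{N+j}+1})\oplus Z_{m,j}$, i.e.\ the $\oplus$-sum (a linear combination) of a previously computed output (again with completion time $m-1$ by the identical arithmetic, or a raw input if $\pi_{N+j}=1$) and a freshly drawn random vector, and server $N$ is sent only $Z_{m,j}$. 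This is the \emph{just-in-time} property: the shift $m-\pi_\ell+1$ makes the required completion time collapse to exactly $m-1$ in every case, so no input is ever called for before it exists.

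To close the induction I would check that the block-$m$ answers let the user recover precisely the new outputs $\out_i(\calR_t^k)$ with $t+k-1=m$ by linear combinations: the phase-one answers are directly $\out_i(\calR_t^k)$, and each phase-two output is obtained by the cancellation $F_{N+j}(\In_n(\cdots)\oplus Z_{m,j})\ominus F_{N+j}Z_{m,j}=F_{N+j}\In_n(\cdots)=\out_n(\cdots)$, valid because each $F_\ell$ is linear. The base case is block $1$, where only raw inputs and random vectors are needed. I would also dispatch the boundary indices $m-\pi_\ell+1\le 0$ (and symmetrically those exceeding $M'$): these correspond to no genuine batch and are, by the scheme's convention, filled with independent random vectors $Z_*$, so they impose no feasibility constraint.

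I expect the principal obstacle to be the indexing bookkeeping rather than any deep estimate: one must align the definition of $\pi_n$, the two-phase block structure, and the shift $m-\pi_\ell+1$ so that the completion time of every required input lands at exactly $m-1$, neither earlier nor at $m$ itself. Once this just-in-time alignment is checked uniformly in both phases, and the linearity-based cancellation of phase two is verified, the induction is routine.
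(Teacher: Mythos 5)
Your proposal is correct and follows essentially the same route as the paper's proof: the same induction over blocks, the same change of variables exploiting $\sigma_{\pi_\ell}=\ell$ to show tasks $\calR_{m'}^{k'}$ with $m'+k'=m+1$ complete at block $m$, the same linearity-based cancellation of $Z_{m,j}$ in phase two, and the same dispatch of the boundary indices via the random $Z_*$ vectors. Your explicit ``just-in-time'' bookkeeping (completion time landing exactly at $m-1$) is a slightly more careful articulation of the paper's identity $\In_i(\calR_{(m+1)-\pi_k+1}^{\pi_k})=\out_i(\calR_{m-(\pi_k-1)+1}^{\pi_k-1})$, but it is the same argument.
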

\begin{proof}

 To prove the feasibility of the vector assignment, we use induction. Note that, the vector assignment of block one is feasible trivially. Here we show that if the vector assignment of block $m$ is feasible, the vector assignment of block $m+1$ is also feasible. 
 
Since vector assignment of block $m$ is feasible,  $\In_{1:N-1}(\calR_{m-\pi_k+1}^{\pi_k})$, $k \in [1:K]$, are available before the $m^{th}$ block of computations begins. We need to show that $\In_{1:N-1}(\calR_{(m+1)-\pi_k+1}^{\pi_k})$ for each  $k \in [1:K]$ can be obtained from the information provided to the user until the end of the $m^{th}$ block. 

From the induction hypothesis,  the user can ask the queries based on the structure described in vector assignment step. We note that after the first phase of the block, the values of $F_{n'} \In_{i}(\calR_{m-\pi_{n'}+1}^{\pi_{n'}})$, $i \in [1:N-1], {n'} \in [1:N]$ are available at the user side. In addition, at the second phase of the $m^{th}$ block, if we cancel the random vectors $Z_{m,i}$, $i \in [1:K-N]$, we conclude that all of the values of $F_{n'}\In_i(\calR^{\pi_{n'}}_{m-\pi_{n'}+1})$, $i \in [1:K-1]$, $n' \in [N+1:K]$ are available at the user side. Thus,  all the values of $F_{n'}\In_i(\calR^{\pi_{n'}}_{m-\pi_{n'}+1})$, $i \in [1:N-1]$, $n' \in [1:K]$, are available for the user after the completion of the $m^{th}$ block. 

Now observe that by changing the variables as $n' = \sigma_k$,
\begin{align}
F_{n'} \In_{i}(\calR_{m-\pi_{n'}+1}^{\pi_{n'}}) &=  F_{\sigma_k} \In_{i}(\calR_{m-\pi_{\sigma_k}+1}^{\pi_{\sigma_k}}) \\
&=  F_{\sigma_k} \In_{i}(\calR_{m-k+1}^{k}) \\
& = \out_{i}(\calR_{m-k+1}^{k}),
\end{align}
which shows that after the $m^{th}$ block, the tasks $\calR^{k'}_{m'}$ such that $m'+k' = m+1$ are performed.  Note that\footnote{
If ${(m+1)-\pi_k+1}\le 0$, then the claim is trivial, because the required vectors are randomly drawn. 
}  $\In_{i}(\calR_{(m+1)-\pi_k+1}^{\pi_k})= \out_{i}(\calR_{(m+1)-\pi_k+1}^{\pi_k-1})= \out_{i}(\calR_{m-(\pi_k-1)+1}^{\pi_k-1})$ for any $k \in [1:K]$ and $i \in [1:N-1]$.  Hence, the proof is complete. 
 \end{proof}
 
Note that the proof of Lemma \ref{lmma1}, we can conclude that at the end of the $(M'+K-1)^{th}$ block, all the tasks $\calR^K_m$, $m\in [1:M']$ are performed, and hence the values of $\out_{1:N-1}(\calR^K_m)$ are available at the user side. Therefore, the proposed scheme is also correct, meaning that after running the proposed scheme, the user achieves its desired result (see the correctness condition in Section \ref{prob_state} for more explanations).

   \subsection{Proof of Privacy}
   To prove the privacy constraint, we require to show that  in the proposed achievable scheme, $I(\tilde{Q}_n,F_{1:K};\Sigma) = 0$, for each $n \in [1:N]$. Due to the deterministic function assignment in the proposed scheme, the only requirement is to show that the inputs given to the servers do not leak any information about the desired  permutation. 
   
Let us denote the inputs given to the $n^{th}$ server at the $m^{th}$ block of computations by $X_{n,m}  := \In_{1:N-1}(\calR_{m-\pi_n+1}^{\pi_n})$, for the first phase, and $Y_{n,m,[1:K-N]} \oplus Z_{m,[1:K-N]}$, for the second phase.  
Due to the proposed vector assignment, we have
$$ Y_{n,m} := Y_{n,m,[1:K-N]} = \left\{
  \begin{array}{ll}
    \In_{n}(\calR_{m-\pi_{[N+1:K]}+1}^{\pi_{[N+1:K]}}) & : n \in [1:N-1]\\
    O_{[N+1:K]} & : n = N,
  \end{array}
\right.
$$
where $O_i \in \mathcal{F}^L$, $i \in [K+1:N]$, are all zero vectors. 
In addition, we define $Z_m:=Z_{m,[1:K-N]}$ and we briefly write $Y_{n,m} \oplus Z_{m}$ to denote $Y_{n,m,[1:K-N]} \oplus Z_{m,[1:K-N]}$.

We need to show that 
\begin{align}
\pbb(\Sigma=\sigma'|\tilde{Q}_n,F_{1:K}) = \pbb(\Sigma=\sigma''|\tilde{Q}_n,F_{1:K}),
\end{align}
for each $\sigma',\sigma'' \in \calS_K$. Note that 
\begin{align}
\pbb(\Sigma=\sigma|\tilde{Q}_n,F_{1:K}) &= \frac{\pbb(\tilde{Q}_n,F_{1:K}|\Sigma=\sigma)\pbb(\Sigma=\sigma)}{\pbb(\tilde{Q}_n,F_{1:K})} \\
& = \frac{\pbb(F_{1:K}|\Sigma=\sigma)\pbb(\tilde{Q}_n |\Sigma=\sigma,F_{1:K})\pbb(\Sigma=\sigma)}{\pbb(\tilde{Q}_n,F_{1:K})} \\
&=\mu(\tilde{Q}_n,F_{1:K})
 \pbb(\tilde{Q}_n|\Sigma=\sigma,F_{1:K}),
\end{align}
where $\mu(\tilde{Q}_n,F_{1:K})$  does not depend on $\sigma$.  %
    Also, 
\begin{align}
 \pbb(\tilde{Q}_n|\Sigma=\sigma,F_{1:K})& \overset{(a)}{=} \pbb \Big( (X_{n,m},Y_{n,m}\oplus Z_m)_{m \in [1:M'+K-1]} |\Sigma=\sigma,F_{1:K} \Big)\\
  & \overset{(b)}{=} \prod_{m \in [1:M'+K-1]} \pbb( X_{n,m},Y_{n,m}\oplus Z_m |\Sigma=\sigma,F_{1:K})\\
  & \overset{(c)}{=} \prod_{m \in [1:M'+K-1]} \pbb( X_{n,m}|\Sigma=\sigma,F_{1:K}) \prod_{m \in [1:M'+K-1]}  \pbb(Y_{n,m}\oplus Z_m |\Sigma=\sigma,F_{1:K})\\
  &  =  \frac{1}{|\mathbb{F}|^{L(N-1)(M'+K-1)}} \times  \frac{1}{|\mathbb{F}|^{L(K-N)(M'+K-1)}} \\
  &  = \frac{1}{|\mathbb{F}|^{L(K-1)(M'+K-1)}},  
 \end{align}
%
%
%
which completes the proof. Note that $(a)$ holds because the utilized function assignment is deterministic, $(b)$ holds because conditioned on  $\Sigma=\sigma$ and $F_{1:K}$, we have the following Markov chain:
\begin{align} 
(X_{n,m_1}, Y_{n,m_1}\oplus Z_{m_1}) \to   W_{m_1-\pi_n+1,1:N-1} \to  W_{m_2-\pi_n+1,1:N-1} \to  (X_{n,m_2}, Y_{n,m_2}\oplus Z_{m_2}),
\end{align}
for each\footnote{
Note that if $M_i-\pi_n+1\le 0$, then the claimed Independence hold trivially. 
} $m_1\neq m_2$. Also, $(c)$ holds because 
\begin{align}
I(Y_{n,m} \oplus Z_m;X_{n,m}|\Sigma = \sigma, F_{1:K})& =H(Y_{n,m} \oplus Z_m|\Sigma=\sigma, F_{1:K}) - H(Y_{n,m} \oplus Z_m|X_{n,m}, \Sigma=\sigma,F_{1:K})\\
&\le (K-N)L  - H(Y_{n,m}  \oplus Z_t|X_{n,m}, \Sigma=\sigma, F_{1:K})\\
&\le (K-N) L   - H(Y_{n,m} \oplus Z_t|X_{n,m}, \Sigma=\sigma,F_{1:K},Y_{n,m})\\
& = (K-N) L - H(Z_m)\\
& = (K-N) L   - (K-N) L  \\
&=0.
\end{align}

      \subsection{Proof of Achievability for the cases where $N-1 \not| M$}
     
     In the previous section, we provided an achievable scheme for the cases where the number of requests is divided by $N-1$. 
     However, to prove the achievability, we need to propose a sequence of achievable scheme for any number of requests (see Section \ref{prob_state}).

Consider a general case of the private sequential function computation problem in which there are arbitrary number of requests $M=M'(N-1) + r$. Here $0 \le r \le N-2$ is an arbitrary integer. 
We propose an achievable scheme for the problem with these parameters as follows. For the first $M'(N-1)$ requests, assume that  the user utilizes the proposed scheme of the previous section. For any  $r$ remaining  requests, the user asks an arbitrary server to compute all  of the possible permutations for that request, i.e., the user asks the server for $K\times K!$ times\footnote{
Actually, this is not efficient to ask this numerous number of requests and the user can ask fewer questions. However, this effect vanishes in the asymptotic regime. 
} for each request. Due to the previous discussions, it is obvious that this scheme is both private and correct. 
Hence, it gives a $(M'(N-1)+r)-$achievable rate. 

      Let us  compute the rate for the proposed scheme for arbitrary number of requests $M'(N-1)+r$. For the first $M'(N-1)$ requests, the scheme includes $M'+K-1$ blocks, each require $N(K-1)$ function computations. 
      For the remaining $r$ requests, the user asks the servers for $r\times K\times K!$ times. All in all, there are $(M'+K-1)\times N(K-1)+r\times K\times K!$ number of queries.       
      Also, there are $M'(N-1)+r$ requests for the sequential computation of $K$ function at the user side. Hence, the rate of the proposed scheme is
      \begin{align}
R = \frac{K \times (M'(N-1)+r)}{(M'+K-1) \times N(K-1)+r\times K\times K!}.
\end{align}
   One can see that as the number of requests tends to infinity (i.e., $M' \to \infty$), this rate achieves $\frac{K(N-1)}{N(K-1)}= \frac{1-\frac{1}{N}}{1-\frac{1}{K}}$, which matches the  capacity of the problem. Therefore, the achievability proof  is complete. 
   
  \section{Converse Proof of Theorem \ref{thrm1}}\label{proof_converse}
 
 We prove the converse of Theorem \ref{thrm1}
 in the following. 
 
 \subsection{Preliminaries}

  \begin{lemma}\label{lemma3}
 Consider a matrix
 $F\in \{F'\in \mathbb{F}^{L\times L}:\det(F')\neq 0 \}$, which is chosen randomly and uniformly, and  $M$ randomly and uniformly generated vectors, $W_{m}\in \mathbb{F}$, $m \in [1:M]$,  each independent from the other vectors and from $F$. 
 Let  $W := (W_1,W_2,\ldots,W_M)\in \mathbb{F}^{L\times M}$ be a randomly generated matrix. We claim that the following propositions hold:
 \begin{enumerate}[label=(\alph*)]
\item   $H(FW_{1:M}) = H(W_{1:M})=ML$
\item   $H(FW_{1:M}|W_{1:M}) =L \times  \mathbb{E} \Big [  \rank (W) \Big ]$
\item   $\pbb(\rank(W) <M)  \xrightarrow{ L  \to \infty } 0$
\item   $ H(FW_{1:M}|W_{1:M}) = M\times (L-o(L))$
\end{enumerate}
 \end{lemma}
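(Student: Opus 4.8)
The plan is to establish the four parts in order, since (d) is obtained by feeding (c) into (b), and both (a) and (b) rest on the same two facts: $F$ is invertible and is independent of $W_{1:M}$.

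For (a) I would condition on $F$. For each fixed invertible $f$ the map $X \mapsto fX$ is a bijection of $\mathbb{F}^{L\times M}$, so the conditional law of $FW_{1:M}$ given $F=f$ is the uniform law on $\mathbb{F}^{L\times M}$, regardless of $f$; hence $FW_{1:M}$ is uniform and independent of $F$, giving $H(FW_{1:M})=ML=H(W_{1:M})$. Equivalently, $H(FW_{1:M})\ge H(FW_{1:M}\mid F)=H(W_{1:M}\mid F)=ML$, and the support bound supplies the matching upper bound.

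For (b) the main point is the conditional law of $Fw$ for a fixed realization $w$ with $\rank(w)=r$. I would observe that $Fw$ depends on $F$ only through the restriction of $F$ to $V:=\Span(w)$, and that, as $F$ is uniform over the invertible matrices, this restriction is uniform over the \emph{injective} linear maps $V\to\mathbb{F}^L$ (the group $GL_L(\mathbb{F})$ acts transitively on such maps by post-composition, with constant-size stabilizers). Since the columns of $w$ span $V$, the correspondence between this restriction and $Fw$ is a bijection, so $Fw$ is uniform on a set of size $N_r:=\prod_{i=0}^{r-1}\big(|\mathbb{F}|^{L}-|\mathbb{F}|^{i}\big)$ and $H(Fw)=\log N_r$. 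The same count follows from a column-by-column chain rule: a column dependent on the earlier ones contributes $0$ (its image is forced by linearity), while each newly independent column contributes $\log\big(|\mathbb{F}|^{L}-|\mathbb{F}|^{j}\big)$, its image being uniform outside the $j$-dimensional span of the previously revealed images. Averaging over $w$ yields $H(FW_{1:M}\mid W_{1:M})=\mathbb{E}\big[\log N_{\rank(W)}\big]$, and since $\log N_r=rL+\sum_{i=0}^{r-1}\log\big(1-|\mathbb{F}|^{i-L}\big)=rL-o(1)$ uniformly for $r\le M$, this equals $L\,\mathbb{E}[\rank(W)]$ up to a vanishing additive term.

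For (c) I would reveal the columns of $W$ one at a time: the $m$-th column lies in the span of its $m-1$ predecessors (a set of at most $|\mathbb{F}|^{m-1}$ vectors) with probability at most $|\mathbb{F}|^{m-1-L}$, so a union bound gives $\pbb(\rank(W)<M)\le\sum_{m=1}^{M}|\mathbb{F}|^{m-1-L}\le M|\mathbb{F}|^{M-1-L}\to 0$ as $L\to\infty$ for fixed $M$. Part (d) is then immediate: from $0\le \mathbb{E}[M-\rank(W)]\le M\,\pbb(\rank(W)<M)\to 0$ we obtain $\mathbb{E}[\rank(W)]=M-o(1)$, and substituting into (b) gives $H(FW_{1:M}\mid W_{1:M})=L(M-o(1))-o(1)=M(L-o(L))$. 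The only genuine obstacle is the structural claim in (b): verifying that the restriction $F|_V$ is truly uniform over injective maps, and tracking that each independent direction contributes $L-o(1)$ rather than exactly $L$. Consequently the equality asserted in (b) holds only up to an $o(L)$ correction, but this is precisely the slack that (d) allows, so no downstream bound is affected.
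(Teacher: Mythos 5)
Your proof is correct and follows the same skeleton as the paper's: part (a) via $H(FW_{1:M}\mid F)=H(W_{1:M})=ML$ plus the support upper bound, part (c) via the identical union bound $\sum_{m}|\mathbb{F}|^{m-1-L}$, and part (d) by feeding (c) into (b). The genuine divergence is in (b), and your version is in fact \emph{more} accurate than the paper's. The paper extends the independent columns $w_{s_1},\ldots,w_{s_{\rank(w)}}$ to an invertible matrix $U$, uses the invariance $FU\sim F$ to reduce to the columns of a uniform invertible matrix $F'$, and then asserts $H(F'e_m\mid F'e_{1:m-1})=L$ exactly. That last equality is false: given the first $m-1$ columns of a uniform element of $GL_L(\mathbb{F})$, the $m$-th column is uniform on the complement of their span, a set of size $|\mathbb{F}|^{L}-|\mathbb{F}|^{m-1}$, so the conditional entropy is $\log\bigl(|\mathbb{F}|^{L}-|\mathbb{F}|^{m-1}\bigr)<L$ (already $H(F'e_1)=\log(|\mathbb{F}|^{L}-1)$). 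Your orbit--stabilizer computation --- $F$ restricted to $\Span(w)$ is uniform over injective linear maps, hence $Fw$ is uniform on $N_r=\prod_{i=0}^{r-1}\bigl(|\mathbb{F}|^{L}-|\mathbb{F}|^{i}\bigr)$ points --- gives the exact value $\log N_r=rL-o(1)$, and your column-by-column chain rule is essentially the paper's argument carried out with the correct conditional law. So the exact equality asserted in part (b) of the lemma holds only up to an $o(1)$ additive term, as you flag; you are also right that this is harmless downstream, since the converse only invokes (a) and the asymptotic statement (d), where the discrepancy is absorbed into the $o(L)$. In short: same route, but your treatment repairs the one step where the paper's proof (and the lemma's stated equality in (b)) overclaims.
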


\begin{proof}[Proof of (a)]
$ML =  H(W_{1:M}) = H(FW_{1:M}|F) \le H(FW_{1:M}) \le ML$.
\end{proof}

\begin{proof}[Proof of (b)]
Denote an arbitrary realization of random vectors $W_{1:M}$ by $w_{1:M}$. 
Let $\rank(w) = \dim (\Span(w_{\calS}))$, where $\calS = \{s_1,s_2,\ldots, s_{\rank(w)}\} \subseteq [1:M]$ includes $\rank(w)$ distinct elements.  
Add specific vectors to the columns of matrix $(w_{s_1},w_{s_2},\ldots,w_{s_{\rank(w)}})$, and construct a matrix 
\begin{align}
U =\Big  (w_{s_1},w_{s_2},\ldots,w_{s_{\rank(w)}},U_{1},U_2,\ldots,U_{L-\rank(w)}\Big) \in \mathbb{F}^{L\times L},
\end{align}
 such that $U$ is an invertible matrix.  Note that $F' := FU \sim F$.  Let $e_m \in \mathbb{F}^L$ denotes the vector with all zero elements, except the $m^{th}$ element, which is equal to one.   Note that $U^{-1}w_{s_{m}} = e_m$, for all $M \in [1:\rank(w)]$. 
 Now we write
 \begin{align}
H(FW_{1:M}|W_{1:M}=w_{1:M}) & 
= \sum_{m=1}^{\rank(w)} H(Fw_{s_m}|Fw_{s_{1:m-1}})+H(Fw_{[1:M]\backslash \calS }|Fw_{\calS})\\
&\overset{(a)}{=} \sum_{m=1}^{\rank(w)} H(Fw_{s_m}|Fw_{s_{1:m-1}})\\
 &=\sum_{m=1}^{\rank(w)} H((FU)U^{-1}w_{s_m}|(FU)U^{-1}w_{s_{1:m-1}})\\
  & =  \sum_{m=1}^{\rank(w)} H(F'U^{-1}w_{s_m}|F'U^{-1}w_{s_{1:m-1}})\\ 
   & =  \sum_{m=1}^{\rank(w)} H(F'e_m|F'e_{1:m-1})\\
      & = \sum_{m=1}^{\rank(w)} L\\
       &= L\times \rank(w),
\end{align}
where $(a)$ follows since $w_{m}\in \Span(w_{\calS})$, for each  $m \in [1:M] \backslash \calS$.
Now taking the expectation from the two sides of the above inequality yields the desired result. 
\end{proof}

\begin{proof}[Proof of (c)]
Observe that 
\begin{align}
\pbb(\rank(W) <M) &= \pbb \Big (\bigcup_{m=1}^{N} \{ W_m \in \Span (W_{1:m-1}) \} \Big ). \\
\end{align}
Note that 
$\dim (\Span (W_{1:m-1})) \le m-1$, 
which means that each vector in $ \Span (W_{1:m-1})$ can be written as the weighted summation of $m-1$ specific vectors. This means that $ \Span (W_{1:m-1})$ contains at most $|\mathbb{F}|^{m-1}$ distinct vectors. However, the vector $W_m$ is chosen uniformly from the set $\mathbb{F}^L$.
Therefore, the probability that $W_m$ lies  in $\Span (W_{1:m-1})$ can be upper bounded by $|\mathbb{F}|^{m-1-L}$. Now we write
 \begin{align}
\pbb \Big (\bigcup_{m=1}^{M} \{ W_m \in \Span (W_{1:m-1}) \} \Big ) &\le \sum_{m=1}^{M} \pbb \Big (  W_m \in \Span (W_{1:m-1}) \Big ) \\
&\le \sum_{m=1}^M |\mathbb{F}|^{m-1-L}\\
& = \frac{|\mathbb{F}|^M-1}{|\mathbb{F}|^L(|\mathbb{F}|-1)},\\
\end{align}
which goes to zero as $L\to \infty$. This completes the proof.
\end{proof}

\begin{proof}[Proof of (d)]
Using $(b),(c)$, we obtain 
\begin{align}
 H(FW_{1:M}|W_{1:M}) &= L \times \mathbb{E} \Big [  \rank (W) \Big ] \\
 &=L \times \Big (\sum_{0=1}^{M-1}m \times \pbb(\rank (W)=m)+M \times \pbb(\rank (W)=M)  \Big ) \\
&= L \times \Big ( \sum_{m=1}^{M-1}m \times o(1)+M \times (1-o(1)) \Big )\\
& = M \times (L-o(L)).
\end{align}
\end{proof}

\begin{lemma}\label{lemma4}
For any random variables $X,Y,Z$, such that $Z$ takes values from $\calZ$,
\begin{align}
I(X;Y|Z) - \log(|\calZ|) \le I(X,Y) \le \log(|\calZ|)+I(X;Y|Z). 
\end{align}
\end{lemma}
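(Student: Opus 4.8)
The plan is to reduce both inequalities to the single chain-rule identity relating $I(X;Y)$ and $I(X;Y|Z)$, and then to control the discrepancy by the entropy of $Z$, which is at most $\log|\calZ|$. The only ingredients needed are the chain rule for mutual information, the nonnegativity of (conditional) mutual information, and the bound $H(Z) \le \log|\calZ|$; no special structure of $X,Y,Z$ is used.

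First I would expand the mutual information $I(X;Y,Z)$ in two different ways using the chain rule:
\begin{align}
I(X;Y,Z) = I(X;Y) + I(X;Z|Y) = I(X;Z) + I(X;Y|Z).
\end{align}
Subtracting the two right-hand expressions gives the key identity
\begin{align}
I(X;Y) - I(X;Y|Z) = I(X;Z) - I(X;Z|Y).
\end{align}
This recasts the gap between $I(X;Y)$ and $I(X;Y|Z)$ entirely in terms of mutual informations involving $Z$, each of which is bounded by the entropy of $Z$.

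For the upper bound, I would discard the nonnegative term $I(X;Z|Y) \ge 0$ on the right of the identity and bound the remaining term by $I(X;Z) \le H(Z) \le \log|\calZ|$, yielding $I(X;Y) - I(X;Y|Z) \le \log|\calZ|$, i.e. $I(X;Y) \le \log|\calZ| + I(X;Y|Z)$. For the lower bound, I would rewrite the identity as
\begin{align}
I(X;Y|Z) - I(X;Y) = I(X;Z|Y) - I(X;Z),
\end{align}
discard the nonnegative term $I(X;Z) \ge 0$, and bound $I(X;Z|Y) \le H(Z|Y) \le H(Z) \le \log|\calZ|$, which gives $I(X;Y|Z) - I(X;Y) \le \log|\calZ|$, i.e. $I(X;Y|Z) - \log|\calZ| \le I(X;Y)$.

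There is essentially no substantive obstacle here: the result is a routine consequence of the chain rule and is equivalent to the single statement $|I(X;Y) - I(X;Y|Z)| \le \log|\calZ|$. The only point that requires the slight asymmetry in the two cases is that the upper bound uses $I(X;Z) \le H(Z)$ while the lower bound uses the conditional version $I(X;Z|Y) \le H(Z|Y) \le H(Z)$; both ultimately collapse to $H(Z) \le \log|\calZ|$, so the same bound $\log|\calZ|$ appears on both sides.
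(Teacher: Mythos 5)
Your proof is correct and is essentially the same argument as the paper's: the paper's two entropy chains amount to exactly your identity $I(X;Y)-I(X;Y|Z)=I(X;Z)-I(X;Z|Y)$, with the upper bound implicitly using $I(X;Z)\le H(Z)$ (via $H(X)\le H(X,Z)$) and the lower bound using $I(X;Z|Y)\le H(Z|Y)\le H(Z)\le \log|\calZ|$, just as you do. Your packaging via the double chain-rule expansion of $I(X;Y,Z)$ is merely a tidier presentation of the same steps.
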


\begin{proof}
\begin{align}
I(X,Y) & = H(X)-H(X|Y)\\
& \le H(X)- H(X|Y,Z)\\
& \le H(X,Z) - H(X|Y,Z)\\
& = H(Z)+H(X|Z)-H(X|Y,Z)\\
& = H(Z)+I(X,Y|Z)\\
&\le \log(|\calZ|)+I(X,Y|Z) . 
\end{align}
In addition, we write
\begin{align}
I(X,Y) & = H(X)-H(X|Y)\\
& \ge H(X|Z)- H(X|Y)\\
& =   H(X|Z)- H(X|Y,Z) + H(X|Y,Z)- H(X|Y)\\
& = I(X,Y|Z) + H(X|Y,Z)- H(X|Y)\\
& =  I(X,Y|Z) -I(X;Z|Y)\\
& \ge   I(X,Y|Z) -H(Z|Y)\\ 
& \ge   I(X,Y|Z) -H(Z) \\
& \ge   I(X,Y|Z) - \log(|\calZ|). 
\end{align}
\end{proof}

 
 \subsection{Proof of the Converse}

In order to prove the converse, we  show that 
for each $M-$achievable rate $R_M$, we have $R_M\le 1$, meaning that  $D \ge KM$. Let us define\footnote{The numbers
$D_k^{(L)}$, $k\in[1:K]$, are possibly random, due to the  random function assignment. However, the  summation  of them is deterministic, which is equal to $D$.
} $D_k^{(L)} := |\{d\in [1:D]:Q_{d}^{\text{(function)}} = k \}|$ for a given computation scheme. Note that $D=\sum_{k=1}^K \mathbb{E} \Big [ D_k^{(L)} \Big ]$. 
Observe that to prove the converse, it is sufficient to show that $ \mathbb{E} \Big [D_k^{(L)} \Big] \ge M-o(1)$ for each $k$.  Consider a sequence computations schemes, for $L \in \mathbb{N}$,  for the  $M-$achievable rate $R_M$ (see Definition \ref{def3}).   
 To review, note that from  the Fano's inequality, we obtain
 \begin{align}
H(F_{\Sigma_K}^{(L)}F_{\Sigma_{K-1}}^{(L)}\cdots F_{\Sigma_{1}}^{(L)} W_{m}^{(L)}|W^{(L)}_{1:T},Q^{(L)}_{1:D}, A^{(L)}_{1:D},\Sigma)=o(L),
\end{align}
for any $m \in [1:M]$.
For the sake of brevity, we do not use  the superscripts $(.)^{(L)}$ throughout this section anymore. 
Also, let  $
F_{\Sigma}:= F_{\Sigma_K}F_{\Sigma_{K-1}}\cdots F_{\Sigma_1} 
$ and 
$F_{\sim k} :=(F_{1},F_2,\cdots, F_{k-1},F_{k+1},\cdots,F_K)$. 

Fix an integer $k \in [1:K]$. We write
  \begin{align}
  ML & \overset{(a)}{=} H(F_{\Sigma} W_{1:M}|F_{\sim k},\Sigma) 
  \\  &= 
H(F_{\Sigma} W_{1:M}|W_{1:M},Q_{1:D}, A_{1:D},F_{\sim k},\Sigma)+I(F_{\Sigma}  W_{1:M};W_{1:M},Q_{1:D}, A_{1:D}|F_{\sim k},\Sigma) \\
& \le \sum_{m=1}^TH(F_{\Sigma} W_{m}|W_{1:M},Q_{1:D}, A_{1:D},F_{\sim k},\Sigma)+I(F_{\Sigma}  W_{1:M};W_{1:M},Q_{1:D}, A_{1:D}|F_{\sim k},\Sigma)\\& \le \sum_{m=1}^MH(F_{\Sigma} W_{m}|W_{1:M},Q_{1:D}, A_{1:D})+I(F_{\Sigma}  W_{1:M};W_{1:M},Q_{1:D}, A_{1:D}|F_{\sim k},\Sigma)\\
& \overset{(b)}{\le} M\times o(L)+I(F_{\Sigma}  W_{1:M};W_{1:M},Q_{1:D}, A_{1:D}|F_{\sim k},\Sigma)\\
&   \overset{(c)}{\le}o(L)+D \log(K)+I(F_{\Sigma}  W_{1:M};W_{1:M},Q_{1:D}, A_{1:D}|F_{\sim k},\Sigma, Q_{1:D}^{\text{(function)}}),\label{line1}
\end{align}
where $(a)$ follows from Lemma \ref{lemma3}, $(b)$ follows from the correctness property,  and $(c)$ follows from Lemma \ref{lemma4} and the fact that $Q_{1:D}^{(\text{function})}\in [1:K]^D$. 

Let $\delta_{1:D}\in [1:K]^D$ denotes an arbitrary realization of $Q_{1:D}^{(\text{function})}$. Let $\calS = \{ s\in [1:D] : \delta_s = k\}$. Assume $\calS = \{s_1,s_2,\ldots,s_{d_k}\}$ denote the elements of $\calS$, which are ordered increasingly\footnote{
The realization of the random variable $D_k$ is denoted by $d_k$. 
}.

\begin{lemma}\label{lemma5}
For any integer $d \in [1:D]$, 
\begin{align}
I(F_{\Sigma}  W_{1:M};W_{1:M}&,Q_{1:d}, A_{\calS \cup [1:d]}|F_{\sim k},\Sigma, Q_{1:D}^{\text{(function)}}=\delta_{1:D})   \\
&=  I(F_{\Sigma}  W_{1:M};W_{1:M},Q_{1:d-1}, A_{\calS \cup [1:d-1]}|F_{\sim k}, \Sigma, Q_{1:D}^{\text{(function)}}=\delta_{1:D}).
\end{align}
\end{lemma}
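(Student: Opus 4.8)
The plan is to pin down exactly which random variables are freshly introduced when passing from the $(d-1)$-indexed mutual information to the $d$-indexed one, and then to show via the chain rule that these new variables carry no information about the target $F_{\Sigma} W_{1:M}$ once everything else is conditioned on. Write $G:=F_{\Sigma}W_{1:M}$ and abbreviate the conditioning as $E:=(F_{\sim k},\Sigma,Q_{1:D}^{\text{(function)}}=\delta_{1:D})$. The argument list at step $d$, namely $(W_{1:M},Q_{1:d},A_{\calS\cup[1:d]})$, contains the list at step $d-1$, namely $(W_{1:M},Q_{1:d-1},A_{\calS\cup[1:d-1]})$, together with the single extra query $Q_d$ and --- only when $d\notin\calS$ --- the extra answer $A_d$ (if $d\in\calS$ then $A_d$ is already present, since $d\in\calS\cup[1:d-1]$). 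By the chain rule for mutual information, the asserted equality is therefore equivalent to showing that the newly added variables $Q_d$, and $A_d$ in the case $d\notin\calS$, satisfy
\[
I\big(G;Q_d \mid W_{1:M},Q_{1:d-1},A_{\calS\cup[1:d-1]},E\big)=0
\quad\text{and, when } d\notin\calS,\quad
I\big(G;A_d \mid W_{1:M},Q_{1:d},A_{\calS\cup[1:d-1]},E\big)=0 .
\]

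First I would dispatch the query $Q_d$. Since $Q_d=\Phi_d(W_{1:M},A_{1:d-1},\Sigma)$ and $[1:d-1]\subseteq\calS\cup[1:d-1]$, the tuple $A_{1:d-1}$ is a subtuple of $A_{\calS\cup[1:d-1]}$; together with $W_{1:M}$ and $\Sigma\in E$, this means $Q_d$ is a (possibly randomized) function of variables already in the conditioning, with only the encoder's internal randomness --- independent of everything, in particular of $G$ --- remaining. Hence $Q_d$ is conditionally independent of $G$ and contributes nothing. Next, in the case $d\notin\calS$, I add $A_d$ \emph{after} $Q_d$ has entered the conditioning: here $\delta_d\neq k$, so $A_d=F_{\delta_d}Q_d^{\text{(input)}}$ with $F_{\delta_d}$ one of the matrices in $F_{\sim k}\in E$ and $Q_d^{\text{(input)}}$ a component of $Q_d$. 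Thus $A_d$ is a deterministic function of the conditioned variables and again contributes zero conditional mutual information. Assembling the two displays through the chain rule yields the claim.

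The conceptual heart of the lemma --- and the step I would be most careful about --- is the asymmetric bookkeeping built into $\calS\cup[1:d]$: this set withholds $A_d$ from the running tally precisely when $\delta_d=k$, i.e.\ exactly on the queries that actually invoke the unknown matrix $F_k$. This is deliberate, since the answers indexed by $\calS$ are the only ones that can reveal information about $F_k$ (and hence about $F_{\Sigma}$) beyond what $F_{\sim k}$ already determines, so they are the only answers we must never ``absorb'' for free. I would therefore state the decomposition so that $A_d$ is genuinely new only when $d\notin\calS$, and that when $d\in\calS$ no answer is added at all, guaranteeing that the deterministic-reconstruction argument is never applied to an $F_k$-query. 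I would also verify explicitly that the independence of the encoder randomness from $G$ is legitimate in the model of Definition~\ref{def3}, since the $Q_d$ step rests entirely on it.
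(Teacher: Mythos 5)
Your proof is essentially the paper's own argument: the paper compresses your two chain-rule steps into the single Markov chain $F_{\Sigma}W_{1:M} \to (W_{1:M},Q_{1:d-1},A_{\calS\cup[1:d-1]}) \to Q_d \to A_d\,\mathbbm{1}\{d\notin\calS\}$ conditioned on $(F_{\sim k},\Sigma,Q_{1:D}^{\text{(function)}}=\delta_{1:D})$, with $Q_d$ a randomized function of already-conditioned variables (via $Q_d=\Phi_d(W_{1:M},A_{1:d-1},\Sigma)$ and $[1:d-1]\subseteq\calS\cup[1:d-1]$) and $A_d$ deterministic given $Q_d$ and $F_{\sim k}$ when $\delta_d\neq k$, exactly as you argue. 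Your indicator bookkeeping is in fact cleaner than the printed version, which writes $\mathbbm{1}\{d\in\calS\}$ where $\mathbbm{1}\{d\notin\calS\}$ is clearly intended --- precisely the asymmetry you single out as the heart of the lemma.
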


\begin{proof}
Conditioning on $F_{\sim k},\Sigma$ and $Q_{1:D}^{\text{(function)}}=\delta_{1:D}$,  we have the following Markov chain:
\begin{align}
F_{\Sigma}  W_{1:M} \to 
(W_{1:M},Q_{1:d-1}, A_{\calS \cup [1:d-1]} ) \to
Q_{d}  \to 
 A_{d} \times  \mathbbm{1}\{d \in \calS \},
\end{align}
which concludes the desired result. 
\end{proof}
 
Using Lemma \ref{lemma5}, we obtain
\begin{align}
&I(F_{\Sigma}  W_{1:M};W_{1:M},Q_{1:D}, A_{1:D}|F_{\sim k}, \Sigma,  Q_{1:D}^{\text{(function)}}=\delta_{1:D})   \\
&= I(F_{\Sigma}  W_{1:M};W_{1:M},Q_{1:D}, A_{\calS \cup [1:D]}|F_{\sim k}, \Sigma, Q_{1:D}^{\text{(function)}}=\delta_{1:D})   \\
&=  I(F_{\Sigma}  W_{1:M};W_{1:M},Q_{1:D-1}, A_{\calS \cup [1:D-1]}|F_{\sim k}, \Sigma, Q_{1:D}^{\text{(function)}}=\delta_{1:D})  \\
&=  I(F_{\Sigma}  W_{1:M};W_{1:M},Q_{1:D-2}, A_{\calS \cup [1:D-2]}|F_{\sim k}, \Sigma, Q_{1:D}^{\text{(function)}}=\delta_{1:D})\\
& =~~~ \cdots \\
 &=  I(F_{\Sigma}  W_{1:M};W_{1:M},Q_{1}, A_{\calS \cup \{1\} }|F_{\sim k}, \Sigma,  Q_{1:D}^{\text{(function)}}=\delta_{1:D}) 
  \\
 &=  I(F_{\Sigma}  W_{1:M};W_{1:M}, A_{\calS }|F_{\sim k}, \Sigma, Q_{1:D}^{\text{(function)}}=\delta_{1:D}) \\
& =   I(F_{\Sigma}  W_{1:M};W_{1:M}|F_{\sim k}, \Sigma,  Q_{1:D}^{\text{(function)}}=\delta_{1:D}) +  I(F_{\Sigma}  W_{1:M}; A_{\calS }|F_{\sim k}, \Sigma,  Q_{1:D}^{\text{(function)}}=\delta_{1:D},W_{1:M}) \\
& \overset{(a)}{\le}   I(F_{\Sigma}  W_{1:M};W_{1:M}|F_{\sim k}, \Sigma) +D \log(K) +  I(F_{\Sigma}  W_{1:M}; A_{\calS }|F_{\sim k}, \Sigma,  Q_{1:D}^{\text{(function)}}=\delta_{1:D},W_{1:M})
\\
& \le  I(F_{\Sigma}  W_{1:M};W_{1:M}|F_{\sim k}, \Sigma) +D \log(K) +  H( A_{\calS }|F_{\sim k}, \Sigma,  Q_{1:D}^{\text{(function)}}=\delta_{1:D},W_{1:M})
\\
& \le  I(F_{\Sigma}  W_{1:M};W_{1:M}|F_{\sim k}, \Sigma) +D \log(K) +  H( A_{\calS })
\\
& \overset{(b)}{\le}  I(F_{\Sigma}  W_{1:M};W_{1:M}|F_{\sim k}, \Sigma) +D \log(K) +  L d_k
\\
& =   H(W_{1:M}|F_{\sim k}, \Sigma)  -  H(F_{\Sigma}  W_{1:M}|F_{\sim k}, \Sigma, W_{1:M})+D \log(K) +  L d_k
\\
& \overset{(c)}{=}  ML  -  M(L-o(L))+D \log(K) +  L d_k,\label{line2}
 \end{align}
where $(a)$ follows from Lemma \ref{lemma4}, $(b)$ follows from the fact that $|\calS| = d_k$, and $(c)$ follows from Lemma \ref{lemma3}.  
Taking the  expectation from (\ref{line2}) and combining with (\ref{line1})  results 
\begin{align}
ML &\le o(L) + D\log(K) + ML - M(L-o(L)) + D\log(K) + L \times  \mathbb{E} \Big [D_k \Big] \\
& = o(L) + 2D \log(K) + L \times  \mathbb{E} \Big [D_k \Big].
\end{align}
Therefore,
\begin{align}
M& \le o(1) + \frac{2D \log(K)}{L} +  \mathbb{E} \Big [D_k \Big]\\
 &\le o(1) + \mathbb{E} \Big [D_k \Big].
\end{align}
Hence, we conclude that $  \mathbb{E} \Big [D_k \Big]  \ge M-o(1)$,   which completes the proof.

\section{Conclusion and Discussion}\label{conclusion}

In this paper, we introduced the problem of  private  function computation. We studied a basic version of  it, namely private sequential function computation, and we showed that even in this special case, the problem is challenging. In particular, we investigated the information theoretic limits of the private sequential function computation problem, and we derived non-trivial lower and upper bounds for its capacity. 

Some of the future directions of this work are as follows: 
\begin{itemize}
\item One direction  is to derive tighter bounds on the capacity of the problem. We conjecture that the scheme proposed in this paper is capacity achieving, and the challenge is to find a tighter converse.

\item 
 An interesting direction is to investigate the capacity for  limited $M$. 

\item Another interesting direction to consider other  classes of functions for the problem of private function computation, rather than only compositions of (large  scale) linear basic functions, where each basic function appears exactly once in the composition. The first step is possibly to consider the case of compositions of the basic functions, such that they can be repeatedly appear in the composition. Another case is when the desired function can be composition of the linear combinations of the the basic functions. 

\item An extension of the proposed problem is the cases where some of servers, up to a given number,  may collude or behave adversarially. 

\end{itemize}


\begin{thebibliography}{10}
\providecommand{\url}[1]{#1}
\csname url@samestyle\endcsname
\providecommand{\newblock}{\relax}
\providecommand{\bibinfo}[2]{#2}
\providecommand{\BIBentrySTDinterwordspacing}{\spaceskip=0pt\relax}
\providecommand{\BIBentryALTinterwordstretchfactor}{4}
\providecommand{\BIBentryALTinterwordspacing}{\spaceskip=\fontdimen2\font plus
\BIBentryALTinterwordstretchfactor\fontdimen3\font minus
  \fontdimen4\font\relax}
\providecommand{\BIBforeignlanguage}[2]{{%
\expandafter\ifx\csname l@#1\endcsname\relax
\typeout{** WARNING: IEEEtran.bst: No hyphenation pattern has been}%
\typeout{** loaded for the language `#1'. Using the pattern for}%
\typeout{** the default language instead.}%
\else
\language=\csname l@#1\endcsname
\fi
#2}}
\providecommand{\BIBdecl}{\relax}
\BIBdecl

\bibitem{private_seq}
B.~Tahmasebi and M.~A. Maddah-Ali, ``Private sequential function computation,''
  in \emph{2019 IEEE International Symposium on Information Theory (ISIT)},
  July 2019, pp. 1667--1671.



\bibitem{PIR_sudan}
B. Chor, O. Goldreich, E. Kushilevitz, and M. Sudan, “Private information retrieval,” in
\emph{Proceedings of the 36th Annual Symposium on Foundations of Computer Science}, 1995, pp.
41–50.

\bibitem{PIR_sudan2}
 B. Chor, E. Kushilevitz, O. Goldreich, and M. Sudan, “Private Information Retrieval,” \emph{Journal
of the ACM (JACM)}, vol. 45, no. 6, pp. 965–981, 1998.


\bibitem{PIR_jafar}
H.~Sun and S.~A. Jafar, ``The capacity of private information retrieval,''
  \emph{IEEE Transactions on Information Theory}, vol.~63, no.~7, pp.
  4075--4088, July 2017.

\bibitem{PIR_multiround}
------, ``Multiround private information retrieval: Capacity and storage
  overhead,'' \emph{IEEE Transactions on Information Theory}, vol.~64, no.~8,
  pp. 5743--5754, Aug. 2018.

\bibitem{PIR_robust}
------, ``The capacity of robust private information retrieval with colluding
  databases,'' \emph{IEEE Transactions on Information Theory}, vol.~64, no.~4,
  pp. 2361--2370, April 2018.

\bibitem{PIR_MDScoded}
------, ``Private information retrieval from {MDS} coded data with colluding
  servers: Settling a conjecture by freij-hollanti et al.'' \emph{IEEE
  Transactions on Information Theory}, vol.~64, no.~2, pp. 1000--1022, Feb.
  2018.

\bibitem{PIR_coded}
K.~Banawan and S.~Ulukus, ``The capacity of private information retrieval from
  coded databases,'' \emph{IEEE Transactions on Information Theory}, vol.~64,
  no.~3, pp. 1945--1956, March 2018.

\bibitem{PIR_eve}
Q.~Wang and M.~Skoglund, ``Secure private information retrieval from colluding
  databases with eavesdroppers,'' in \emph{2018 IEEE International Symposium on
  Information Theory (ISIT)}, June 2018, pp. 2456--2460.

\bibitem{PIR_adversary}
------, ``Secure symmetric private information retrieval from colluding
  databases with adversaries,'' in \emph{2017 55th Annual Allerton Conference
  on Communication, Control, and Computing (Allerton)}, Oct. 2017, pp.
  1083--1090.

\bibitem{PIR_wiretap}
K.~Banawan and S.~Ulukus, ``Private information retrieval through wiretap
  channel {II},'' in \emph{2018 IEEE International Symposium on Information
  Theory (ISIT)}, June 2018, pp. 971--975.

\bibitem{PIR_cach_1}
R.~Tandon, ``The capacity of cache aided private information retrieval,'' in
  \emph{2017 55th Annual Allerton Conference on Communication, Control, and
  Computing (Allerton)}, Oct. 2017, pp. 1078--1082.

\bibitem{PIR_cach_2}
Y.~Wei, K.~Banawan, and S.~Ulukus, ``Cache-aided private information retrieval
  with partially known uncoded prefetching,'' in \emph{2018 IEEE International
  Conference on Communications (ICC)}, May 2018, pp. 1--6.

\bibitem{PIR_prefetch}
------, ``Cache-aided private information retrieval with unknown and uncoded
  prefetching,'' in \emph{2018 IEEE International Symposium on Information
  Theory (ISIT)}, June 2018, pp. 1949--1953.

\bibitem{PIR_fereji}
R.~Freij-Hollanti, O.~Gnilke, C.~Hollanti, and D.~Karpuk, ``Private information
  retrieval from coded databases with colluding servers,'' \emph{SIAM Journal
  on Applied Algebra and Geometry}, vol.~1, no.~1, pp. 647--664, 2017.

\bibitem{PIR_robust_2}
R.~Tajeddine, O.~W. Gnilke, D.~Karpuk, R.~Freij-Hollanti, and C.~Hollanti,
  ``Robust private information retrieval from coded systems with byzantine and
  colluding servers,'' in \emph{2018 IEEE International Symposium on
  Information Theory (ISIT)}, June 2018, pp. 2451--2455.

\bibitem{PIR_taj}
R.~{Tajeddine}, O.~W. {Gnilke}, D.~{Karpuk}, R.~{Freij-Hollanti}, and
  C.~{Hollanti}, ``Private information retrieval from coded storage systems
  with colluding, byzantine, and unresponsive servers,'' \emph{IEEE
  Transactions on Information Theory}, vol.~65, no.~6, pp. 3898--3906, June
  2019.

\bibitem{PIR_dist}
H.~Yang, W.~Shin, and J.~Lee, ``Private information retrieval for secure
  distributed storage systems,'' \emph{IEEE Transactions on Information
  Forensics and Security}, vol.~13, no.~12, pp. 2953--2964, Dec. 2018.

\bibitem{PIR_packet}
Z.~Zhang and J.~Xu, ``The optimal sub-packetization of linear
  capacity-achieving {PIR} schemes with colluding servers,'' \emph{IEEE
  Transactions on Information Theory}, pp. 1--1, 2018.

\bibitem{PIR_uncoded}
M.~A. Attia, D.~Kumar, and R.~Tandon, ``The capacity of uncoded storage
  constrained {PIR},'' in \emph{2018 IEEE International Symposium on
  Information Theory (ISIT)}, June 2018, pp. 1959--1963.

\bibitem{PIR_trafic}
K.~Banawan and S.~Ulukus, ``Private information retrieval under asymmetric
  traffic constraints,'' in \emph{2018 IEEE International Symposium on
  Information Theory (ISIT)}, June 2018, pp. 2446--2450.

\bibitem{PIR_server}
S.~Li and M.~Gastpar, ``Single-server multi-user private information retrieval
  with side information,'' in \emph{2018 IEEE International Symposium on
  Information Theory (ISIT)}, June 2018, pp. 1954--1958.

\bibitem{PIR_protocol}
H.~Lin, S.~Kumar, E.~Rosnes, and A.~G. i~Amat, ``An {MDS-PIR}
  capacity-achieving protocol for distributed storage using non-mds linear
  codes,'' in \emph{2018 IEEE International Symposium on Information Theory
  (ISIT)}, June 2018, pp. 966--970.

\bibitem{PIR_shannon}
C.~Tian, H.~Sun, and J.~Chen, ``A shannon-theoretic approach to the
  storage-retrieval tradeoff in {PIR} systems,'' in \emph{2018 IEEE
  International Symposium on Information Theory (ISIT)}, June 2018, pp.
  1904--1908.

\bibitem{PIR_graph}
N.~Raviv and I.~Tamot, ``Private information retrieval is graph based
  replication systems,'' in \emph{2018 IEEE International Symposium on
  Information Theory (ISIT)}, June 2018, pp. 1739--1743.

\bibitem{PIR_search}
Z.~Chen, Z.~Wang, and S.~Jafar, ``The asymptotic capacity of private search,''
  in \emph{2018 IEEE International Symposium on Information Theory (ISIT)},
  June 2018, pp. 2122--2126.

\bibitem{sun_delivery}
\BIBentryALTinterwordspacing
H.~Sun, ``Anonymous information delivery,'' \emph{CoRR}, vol. abs/1806.05601,
  2018. [Online]. Available: \url{http://arxiv.org/abs/1806.05601}
\BIBentrySTDinterwordspacing

\bibitem{jafar_symm}
H.~{Sun} and S.~A. {Jafar}, ``The capacity of symmetric private information
  retrieval,'' \emph{IEEE Transactions on Information Theory}, vol.~65, no.~1,
  pp. 322--329, Jan 2019.

\bibitem{pooya_multi}
S.~P. {Shariatpanahi}, M.~J. {Siavoshani}, and M.~A. {Maddah-Ali},
  ``Multi-message private information retrieval with private side
  information,'' in \emph{2018 IEEE Information Theory Workshop (ITW)}, Nov
  2018, pp. 1--5.

\bibitem{jafar_cross}
Z.~{Jia}, H.~{Sun}, and S.~A. {Jafar}, ``Cross subspace alignment and the
  asymptotic capacity of x–secure t–private information retrieval,''
  \emph{IEEE Transactions on Information Theory}, pp. 1--1, 2019.

\bibitem{ulukus_noisy}
K.~{Banawan} and S.~{Ulukus}, ``Noisy private information retrieval,'' in
  \emph{2018 52nd Asilomar Conference on Signals, Systems, and Computers}, Oct
  2018, pp. 1694--1698.

\bibitem{ulukus_mac}
------, ``Private information retrieval from multiple access channels,'' in
  \emph{2018 IEEE Information Theory Workshop (ITW)}, Nov 2018, pp. 1--5.

\bibitem{jafar_x}
\BIBentryALTinterwordspacing
Z.~Jia and S.~A. Jafar, ``On the asymptotic capacity of x-secure t-private
  information retrieval with graph based replicated storage,'' \emph{CoRR},
  vol. abs/1904.05906, 2019. [Online]. Available:
  \url{http://arxiv.org/abs/1904.05906}
\BIBentrySTDinterwordspacing

\bibitem{jafar_bc}
\BIBentryALTinterwordspacing
H.~Sun and S.~A. Jafar, ``On the capacity of computation broadcast,''
  \emph{CoRR}, vol. abs/1903.07597, 2019. [Online]. Available:
  \url{http://arxiv.org/abs/1903.07597}
\BIBentrySTDinterwordspacing

\bibitem{ravi_up}
\BIBentryALTinterwordspacing
W.~Chang and R.~Tandon, ``On the upload versus download cost for secure and
  private matrix multiplication,'' \emph{CoRR}, vol. abs/1906.10684, 2019.
  [Online]. Available: \url{http://arxiv.org/abs/1906.10684}
\BIBentrySTDinterwordspacing

\bibitem{prox}
\BIBentryALTinterwordspacing
Y.~Zhang, E.~Yaakobi, and T.~Etzion, ``Private proximity retrieval codes,''
  \emph{CoRR}, vol. abs/1907.10724, 2019. [Online]. Available:
  \url{http://arxiv.org/abs/1907.10724}
\BIBentrySTDinterwordspacing

\bibitem{mousavi}
\BIBentryALTinterwordspacing
M.~H. Mousavi, M.~A. Maddah{-}Ali, and M.~Mirmohseni, ``Private inner product
  retrieval for distributed machine learning,'' \emph{CoRR}, vol.
  abs/1902.06319, 2019. [Online]. Available:
  \url{http://arxiv.org/abs/1902.06319}
\BIBentrySTDinterwordspacing

\bibitem{kazemi}
\BIBentryALTinterwordspacing
F.~Kazemi, E.~Karimi, A.~Heidarzadeh, and A.~Sprintson, ``Private information
  retrieval with private coded side information: The multi-server case,''
  \emph{CoRR}, vol. abs/1906.11278, 2019. [Online]. Available:
  \url{http://arxiv.org/abs/1906.11278}
\BIBentrySTDinterwordspacing

\bibitem{PIR_anon}
H.~{Sun}, ``The capacity of anonymous communications,'' \emph{IEEE Transactions
  on Information Theory}, vol.~65, no.~6, pp. 3871--3879, June 2019.

\bibitem{PFR_jafar}
\BIBentryALTinterwordspacing
H.~Sun and S.~A. Jafar, ``The capacity of private computation,'' \emph{IEEE
  Transactions on Information Theory}, 2018. [Online]. Available:
  \url{http://arxiv.org/abs/1710.11098}
\BIBentrySTDinterwordspacing

\bibitem{PFR_maddah}
M.~Mirmohseni and M.~A. Maddah-Ali, ``Private function retrieval,'' in
  \emph{2018 Iran Workshop on Communication and Information Theory (IWCIT)},
  April 2018, pp. 1--6.

\bibitem{PFR_linear}
\BIBentryALTinterwordspacing
S.~A. Obead, H.~Lin, E.~Rosnes, and J.~Kliewer, ``Capacity of private linear
  computation for coded databases,'' \emph{CoRR}, vol. abs/1810.04230, 2018.
  [Online]. Available: \url{http://arxiv.org/abs/1810.04230}
\BIBentrySTDinterwordspacing

\bibitem{PFR_general}
D.~Karpuk, ``Private computation of systematically encoded data with colluding
  servers,'' in \emph{2018 IEEE International Symposium on Information Theory
  (ISIT)}, June 2018, pp. 2112--2116.

\bibitem{PFR_coded}
S.~A. Obead and J.~Kliewer, ``Achievable rate of private function retrieval
  from {MDS} coded databases,'' in \emph{2018 IEEE International Symposium on
  Information Theory (ISIT)}, June 2018, pp. 2117--2121.

\bibitem{PFR_polynomial}
\BIBentryALTinterwordspacing
N.~Raviv and D.~A. Karpuk, ``Private polynomial computation from lagrange
  encoding,'' \emph{CoRR}, vol. abs/1812.04142, 2018. [Online]. Available:
  \url{https://arxiv.org/abs/1812.04142}
\BIBentrySTDinterwordspacing

\bibitem{obead}
\BIBentryALTinterwordspacing
S.~A. Obead, H.~Lin, E.~Rosnes, and J.~Kliewer, ``Private polynomial
  computation for noncolluding coded databases,'' \emph{CoRR}, vol.
  abs/1901.10286, 2019. [Online]. Available:
  \url{http://arxiv.org/abs/1901.10286}
\BIBentrySTDinterwordspacing

\bibitem{MPC_yao}
A.~C. Yao, ``Protocols for secure computations,'' in \emph{23rd Annual
  Symposium on Foundations of Computer Science ({SFCS} 1982)}, Nov. 1982, pp.
  160--164.

\bibitem{MPC_shamir}
A.~Shamir, ``How to share a secret,'' \emph{Commun. ACM}, vol.~22, no.~11, pp.
  612--613, Nov. 1979.

\bibitem{MPC_bgw}
M.~Ben-Or, S.~Goldwasser, and A.~Wigderson, ``Completeness theorems for
  non-cryptographic fault-tolerant distributed computation,'' in
  \emph{Proceedings of the Twentieth Annual ACM Symposium on Theory of
  Computing}, 1988, pp. 1--10.

\bibitem{SEC-019}
D.~Evans, V.~Kolesnikov, and M.~Rosulek, ``A pragmatic introduction to secure
  multi-party computation,'' \emph{Foundations and Trends in Privacy and
  Security}, vol.~2, no. 2-3, pp. 70--246, 2018.

\bibitem{limited_sharing}
H.~A. Nodehi and M.~A. Maddah-Ali, ``Limited-sharing multi-party computation
  for massive matrix operations,'' in \emph{2018 IEEE International Symposium
  on Information Theory (ISIT)}, June 2018, pp. 1231--1235.

\bibitem{MPC_maddah}
H.~A. Nodehi, S.~R.~H. Najarkolaei, , and M.~A. Maddah-Ali, ``Entangled
  polynomial coding in limited-sharing multi-party computation,'' in \emph{2018
  IEEE Information Theory Workshop (ITW)}, Nov. 2018.

\bibitem{salman}
Q.~{Yu}, N.~{Raviv}, and A.~S. {Avestimehr}, ``Coding for private and secure
  multiparty computing,'' in \emph{2018 IEEE Information Theory Workshop
  (ITW)}, Nov 2018, pp. 1--5.

\bibitem{aydin}
\BIBentryALTinterwordspacing
J.~Kakar, S.~Ebadifar, and A.~Sezgin, ``Rate-efficiency and
  straggler-robustness through partition in distributed two-sided secure matrix
  computation,'' \emph{CoRR}, vol. abs/1810.13006, 2018. [Online]. Available:
  \url{http://arxiv.org/abs/1810.13006}
\BIBentrySTDinterwordspacing

\bibitem{optimum}
B.~Tahmasebi, M.~A. Maddah-Ali, S.~Parsaeefard, and B.~H. Khalaj, ``Optimum
  transmission delay for function computation in {NFV}-based networks: The role
  of network coding and redundant computing,'' \emph{IEEE Journal on Selected
  Areas in Communications}, vol.~36, no.~10, pp. 2233--2245, Oct. 2018.

\end{thebibliography}
\end{document}